\theoremstyle{plain}
\newtheorem{assumption}[theorem]{Assumption}
\def\hmath$#1${\texorpdfstring{{\rmfamily{#1}}}{#1}}
\title{Who started this rumor? Quantifying the natural differential privacy of gossip protocols} 
\titlerunning{Quantifying the Natural Differential Privacy Guarantees of Gossip Protocols} 
\author{Aur\'elien Bellet}{INRIA}{aurelien.bellet@inria.fr}{https://orcid.org/0000-0003-3440-1251}{This research was supported by grants ANR-16-CE23-0016-01 and ANR-18-CE23-0018-03, by the European Union's Horizon 2020 Research and Innovation Program under Grant Agreement No. 825081 COMPRISE (\url{https://project.inria.fr/comprise/}), by a grant from CPER Nord-Pas de Calais/FEDER DATA Advanced data science and technologies 2015-2020.}
\author{Rachid Guerraoui}{EPFL}{rachid.guerraoui@epfl.ch}{https://orcid.org/0000-0002-4794-8902}{This research was supported by European ERC Grant 339539 - AOC.}
\author{Hadrien Hendrikx}{PSL, DIENS, INRIA}{hadrien.hendrikx@inria.fr}{}{This research was supported by the MSR-INRIA joint centre.}
\authorrunning{A. Bellet, R. Guerraoui and H. Hendrikx} 
\keywords{Gossip Protocol, Rumor Spreading, Differential Privacy} 
\begin{document}

\maketitle

\begin{abstract}
Gossip protocols (also called rumor spreading or epidemic protocols) are widely used to disseminate information in massive peer-to-peer networks. These protocols are often claimed to guarantee privacy because of the  uncertainty they introduce on the node that started the dissemination. But is that claim really true? Can the source of a gossip safely hide in the crowd? This paper examines, for the first time, gossip protocols through a rigorous mathematical framework based on differential privacy to determine the extent to which the source of a gossip can be traceable. Considering the case of a complete graph in which a subset of the nodes are curious, we study a family of gossip protocols parameterized by a ``muting'' parameter $s$: nodes stop emitting after each communication with a fixed probability $1-s$. We first prove that the standard push protocol, corresponding to the case $s=1$, does not satisfy differential privacy for large graphs. In contrast, the protocol with $s=0$ (nodes forward only once) achieves optimal privacy guarantees but at the cost of a drastic increase in the spreading time compared to standard push, revealing an interesting tension between privacy and spreading time. Yet, surprisingly, we show that some choices of the muting parameter $s$ lead to protocols that achieve an optimal order of magnitude in both privacy and speed. Privacy guarantees are obtained by showing that only a small fraction of the possible observations by curious nodes have different probabilities when two different nodes start the gossip, since the source node rapidly stops emitting when $s$ is small. The speed is established by analyzing the mean dynamics of the protocol, and leveraging concentration inequalities to bound the deviations from this mean behavior. We also confirm empirically that, with appropriate choices of $s$, we indeed obtain protocols that are very robust against concrete source location attacks (such as maximum a posteriori estimates) while spreading the information almost as fast as the standard (and non-private) push protocol.
\end{abstract}

\section{Introduction}



\emph{Gossip} protocols (also called \emph{rumor spreading} or \emph{epidemic protocols}), in which  participants \emph{randomly} choose a neighbor to communicate with, are both simple and efficient means to exchange information in P2P networks \cite{frieze1985shortest,pittel1987spreading,karp2000randomized,berenbrink2010efficient}.
They are a basic building block to propagate and aggregate information in distributed databases \cite{demers1987epidemic,boyd2006randomized}
and social networks \cite{doerr2011social,giakkoupis2015privacy}, to model the spread of infectious diseases \cite{SIS},  as well as to train machine learning models on distributed datasets \cite{Duchi2012a,colin2016gossip,vanhaesebrouck2017decentralized,Koloskova2019}.

Some of the information gossiped may be sensitive, and participants sharing it may not want to be identified. This can for instance be the case of whistle-blowers or individuals that would like to exercise their right to freedom of expression in totalitarian regimes. Conversely, it may sometimes be important to locate the source of a (computer or biological) virus, or fake news, in order to prevent it from spreading before too many participants get ``infected''. 

There is a folklore belief that gossip protocols inherently guarantee some form of \emph{source anonymity} because participants cannot know who issued the information in the first place \cite{ghaffari2016discreetly}. Similarly, identifying ``patient zero'' for real-world epidemics is known to be a very hard task. Intuitively indeed, random and local exchanges make identification harder. But to what extent?
Although some work has been devoted to the design of source location strategies in specific settings \cite{jiang2017identifying,pinto2012locating,shah2011rumors}, the general anonymity claim has never been studied from a pure \emph{privacy} perspective, that is, independently of the very choice of a source location technique.
Depending on the use-case, it may be desirable to have strong privacy guarantees (e.g., in anonymous information dissemination) or, on the contrary, we may hope for weak guarantees, e.g., when trying to identify the source of an epidemic.
In both cases, it is crucial to precisely quantify the anonymity level of gossip protocols and study its theoretical limits through a principled approach. This is the challenge we take up in this paper for the classic case of gossip dissemination in a complete network graph.

Our first contribution is an information-theoretic model of anonymity in gossip protocols based on \emph{$(\epsilon,\delta)$-differential privacy} (DP) \cite{dwork2011differential,dwork2006our}. Originally introduced in the database community, DP is a precise mathematical framework  recognized as the gold standard for studying the privacy guarantees of information release protocols. In our proposed model, the information to protect is the source of the gossip, and an adversary tries to locate the source by monitoring the communications (and their relative order) received by a subset of $f$ curious nodes. In a computer network, these curious nodes may have been compromised by a surveillance agency; in our biological example, they could correspond to health professionals who are able to identify whether a given person is infected.
Our notion of DP then requires that the probability of any possible observation of the curious nodes is almost the same no matter who is the source, thereby limiting the predictive power of the adversary regardless of its actual source location strategy. A distinctive aspect of our model is that the mechanism that seeks to ensure DP comes only from the \emph{natural} randomness and partial observability of gossip protocols, not from additional perturbation or noise which affects the desired output as generally needed to guarantee DP \cite{Dwork2014a}.
We believe our adaptation of DP to the gossip context to be of independent interest. We also complement it with a notion of \emph{prediction uncertainty} which guarantees that even unlikely events do not fully reveal the identity of the source under a uniform prior on the source. This property directly upper bounds the probability of success of any source location attack, including the maximum likelihood estimate.

We use our proposed model to study the privacy guarantees of a generic family of gossip protocols parameterized by a \emph{muting parameter} $s$: nodes have a fixed probability $1-s$ to stop emitting after each communication (until they receive the rumor again). In our biological parallel, this corresponds to the fact that a person stops infecting other people after some time. The muting parameter captures the ability of the protocol to forget initial conditions, thereby helping to conceal the identity of the source. In the extreme case where $s=1$, we recover the standard ``push'' gossip protocol \cite{pittel1987spreading}, and show that it is inherently \emph{not} differentially private for large graphs. In contrast, we also show that, at the other end of the spectrum, choosing $s=0$ leads to \emph{optimal privacy guarantees} among all gossip protocols. 

More generally, we determine  \emph{matching upper and lower bounds} on the privacy guarantees of gossip protocols.
Essentially, our upper bounds on privacy are obtained by tightly lower bounding the probability that the source node contacts a curious node before another node does, and upper bounding the probability that this happens for a random node fixed in advance, in a way that holds for all gossip algorithms.
Remarkably, despite the fact that the source node always has a non-negligible probability of telling the rumor to a curious node first, our results highlight the fact that setting $s=0$ leads to strong privacy guarantees in several regimes, including the pure $(\epsilon,0)$-DP as well as prediction uncertainty.

It turns out that, although achieving optimal privacy guarantees, choosing $s=0$ leads to a very slow spreading time (\emph{log-linear} in the number of nodes $n$). This highlights an interesting tension between \emph{privacy} and \emph{spreading time}: the two extreme values for the muting parameter $s$ recover the two extreme points of this trade-off. We then show that more balanced trade-offs can be achieved: appropriate choices of the muting parameter lead to gossip protocols that 
are \emph{near-optimally private} with a spreading time that is \emph{logarithmic} in the size of the graph. In particular, the trade-off between privacy and speed shows up in the constants but, surprisingly, some choices of the parameter lead to protocols that achieve an optimal order of magnitude for both aspects. Our results on this trade-off are summarized in Table~\ref{tab:privacy_speed_tradeoffs}: for a proportion $f/n$ of curious nodes, one can see that setting the muting parameter $s= f/n$ achieves almost optimal privacy (up to a factor $2$) while being substantially faster than $s=0$ (optimal up to a factor $f/n$). Similarly, if one wants to achieve $(0,\delta_0)$-differential privacy with $\delta_0 > 2f/n$, then it is possible to set $s = \delta_0 / 2$ and obtain a protocol that respects the privacy constraint with spreading time $\mathcal{O}(\log(n) / \delta_0)$.
From a technical perspective, these privacy results are obtained by showing that only a small fraction of the possible observations by curious nodes have different probabilities when two different nodes start with the gossip. This requires to precisely evaluate the probability of well-chosen worst-case sequences, which is generally hard as randomness is involved both when nodes decide to stop spreading the rumor (with probability $1 - s$) and when they choose who to communicate with. Our parameterized gossip protocol can be seen as a population protocol~\cite{angluin2008fast}, and we prove its speed by analyzing its mean dynamics and leveraging concentration inequalities to bound the deviations from the mean dynamics.

\begin{table}[t]
    \centering
    \begin{tabular}{c|c|c|c}
         & Muting param. & $\delta$ ensuring $(0, \delta)$-DP & Spreading time\\
         \hline
         \begin{tabular}{cl} Standard push\\
         (minimal privacy, maximal speed)\end{tabular} & $s = 1$ &  1 & $\mathcal{O}(\log n)$\\
         \hline
         \begin{tabular}{c l} Muting after infecting\\
         (maximal privacy, minimal speed)\end{tabular} & $s=0$ & $\frac{f}{n}$ & $\mathcal{O}\left(n \log n\right)$\\
         \hline
         \begin{tabular}{c c} Generic parameterized gossip \\ (privacy vs. speed trade-off)\end{tabular} & $0 < s < 1$ & $s + (1 - s) \frac{f}{n}$ & $\mathcal{O}\left(\log (n) / s\right)$
    \end{tabular}
    \caption{Summary of results to illustrate the tension between privacy and speed. $n$ is the total number of nodes and $f/n$ is the fraction of curious nodes in the graph. $\delta\in[0,1]$ quantifies differential privacy guarantees (smaller is better). Spreading time is asymptotic in $n$.}
    \label{tab:privacy_speed_tradeoffs}
\end{table}

We support our theoretical findings by an empirical study of our parameterized gossip protocols. The results show that appropriate choices of $s$ lead to protocols that are very robust against classical source location attacks (such as maximum a posteriori estimates) while spreading the information almost as fast as the standard (and non-private) push protocol.
Crucially, we observe that our differential privacy guarantees are very well aligned with the ability to withstand attacks that leverage background information, e.g., targeting known activists or people who have been to certain places. 

The rest of the paper is organized as follows.
We first discuss related work and formally introduce our concept of differential privacy for gossip. Then, we study two extreme cases of our parameterized gossip protocol: the standard push protocol, which we show is not private, and a privacy-optimal but slow protocol. This leads us to investigate how to better control the trade-off between speed and privacy. Finally, we present our empirical study and conclude by discussing open questions.

For pedagogical reasons, we keep our model relatively simple to avoid unnecessary technicalities in the derivation and presentation of our results. For completeness, we discuss the impact of possible extensions (e.g., information observed by the adversary, malicious behavior, termination criterion) in Appendix~\ref{app:remarks}. For space limitations, some detailed proofs are also deferred to the appendix. 

\section{Background and Related Work}
\label{sec:related_work}

\subsection{Gossiping}
The idea of disseminating information in a distributed system by having each node \emph{push} messages to a randomly chosen neighbor, initially coined the  \emph{random phone-call model},  dates back to even before the democratization of the Internet \cite{pittel1987spreading}.
Such protocols, later called \emph{gossip}, \emph{epidemic} or \emph{rumor spreading},  were for instance applied  to ensure the consistency of a replicated database system \cite{demers1987epidemic}.
They have gained even more importance when argued to model spreading of infectious diseases \cite{SIS} and information dissemination in social networks \cite{doerr2011social, giakkoupis2015privacy}. Gossip protocols can also be used to compute aggregate queries on a database distributed across the nodes of a network \cite{Kempe2003a,boyd2006randomized}, 
 and have recently become popular in federated machine learning \cite{kairouz2019advances} to optimize cost functions over data distributed across a large set of peers
\cite{Duchi2012a,colin2016gossip,vanhaesebrouck2017decentralized,Koloskova2019}.
 Gossip protocols differ according to their interaction schemes, i.e., \emph{pull} or \emph{push}, sometimes combining both  \cite{karp2000randomized,kowalski2013estimating,acan2017push}. 

In this work, we focus on the classical \emph{push} form in the standard case of a \emph{complete} graph with $n$ nodes (labeled from $0$ to $n - 1$). We now define its key communication primitive. Denoting by $I$ the set of informed nodes, \verb+tell_gossip+$(i, I)$ allows an informed node $i\in I$ to tell the information to another node $j\in\{0, ..., n - 1\}$ chosen uniformly at random. \verb+tell_gossip+$(i, I)$ returns $j$ (the node that received the message) and the updated $I$ (the new set of informed nodes that includes $j$).
Equipped with this primitive, we can now formally define the class of gossip protocols that we consider in this paper.

\begin{definition}[Gossip protocols]
\label{def:push_algo}
A gossip protocol on a complete graph is one that (a) terminates almost surely, (b) ensures that at the end of the execution the set of informed nodes $I$ is equal to $\{0, ..., n-1\}$, and (c) can modify $I$ only through calls to \verb+tell_gossip+.
\end{definition}

\subsection{Locating the Source of the Gossip}
\label{sec:finding}
Determining the source of a gossip is an active research topic, especially given the potential applications to epidemics and social networks, see \cite{jiang2017identifying} for a recent survey. Existing approaches have focused so far on building \emph{source location attacks} that compute or approximate the maximum likelihood estimate of the source given some observed information. Each approach typically assumes a specific kind of graphs (e.g., trees, small world, \emph{etc}.), dissemination model and observed information. In \emph{rumor centrality} \cite{shah2011rumors},
the gossip communication graph is assumed to be fully observed and the goal is to determine the \emph{center} of this graph to deduce the node that started the gossip. 
Another line of work studies the setting in which some nodes are  \emph{curious sensors} that inform a central entity when they receive a message \cite{pinto2012locating}.
Gossiping is assumed to happen at random times and the source node is estimated by comparing the different timings at which information reaches the sensors.
The proposed attack is natural in trees but does not generalize to highly connected graphs.
The work of \cite{fanti2017hiding} focuses on hiding the source instead of locating it. The observed information is a snapshot of who has the rumor at a given time. A specific dissemination protocol is proposed to hide the source but the privacy guarantees only hold for tree graphs.

We emphasize that the privacy guarantees (i.e., the probability not to be detected) that can be derived from the above work only hold under the specific attacks considered therein. Furthermore, all approaches rely on maximum likelihood and hence assume a uniform prior on the probability of each node to be the source. The guarantees thus break if the adversary knows that some of the nodes could not have started the rumor, or if he is aware that the protocol is run twice from the same source.

We note that other problems at the intersection of gossip protocols and privacy have been investigated in previous work, such as preventing unintended recipients from learning the rumor \cite{georgiou2011confidential}, and hiding the initial position of agents in a distributed system \cite{gotfryd2017location}.


\subsection{Differential Privacy}
\label{sec:related_priv}
While we borrow ideas from the approaches mentioned above  (e.g., we assume that a subset of nodes are curious sensors as in \cite{pinto2012locating}), we aim at studying the fundamental limits of \emph{any} source location attack by measuring the amount of information leaked by a gossip scheme about the identity of the source. For this purpose, a general and robust notion of privacy is required.
\emph{Differential privacy} \cite{dwork2011differential,Dwork2014a} has emerged as a gold standard for it holds independently of any assumption on the model, the computational power,  or the background knowledge that the adversary may have. Differentially private protocols have been proposed for numerous problems in the fields of databases, data mining and machine learning: examples include computing aggregate and linear counting queries \cite{Dwork2014a}, releasing and estimating graph properties \cite{graph1, sun2019analyzing}, clustering \cite{clustering}, empirical risk minimization \cite{chaudhuri2011} and deep learning \cite{Abadi2016}.

In this work, we consider the classic version of differential privacy which involves two parameters $\epsilon,\delta\geq0$ that quantify the privacy guarantee \cite{dwork2006our}. More precisely, given any two databases $\mathcal{D}_1$ and $\mathcal{D}_2$ that differ in at most one record,
a randomized information release protocol $\mathcal{P}$ is said to guarantee $(\epsilon, \delta)$-differential privacy if for any possible output $S$:
\begin{equation}
\label{eq:classicdp}
p(\mathcal{P}(\mathcal{D}_1) \in S) \leq e^\epsilon p(\mathcal{P}(\mathcal{D}_2) \in S) + \delta,
\end{equation}
where $p(E)$ denotes the probability of event $E$. Parameter $\epsilon$ places a bound on the ratio of the probability of any output when changing one record of the database, while parameter $\delta$ is assumed to be small and allows the bound to be violated with small probability. When $\epsilon=0$, $\delta$ gives a bound on the total variation distance between the output distributions, while $\delta=0$ is sometimes called ``pure'' $\epsilon$-differential privacy.
DP guarantees hold regardless of the adversary and its background knowledge about the records in the database.
In our context, the background information could be the knowledge that the source is among a subset of all nodes. Robustness against such background knowledge is crucial in some applications, for instance when sharing secret information that few people could possibly know or when the source of an epidemic is known to be among people who visited a certain place. 
Another key feature of differential privacy is \emph{composability}: if $(\epsilon, \delta)$-differential privacy holds for a release protocol, then querying this protocol two times about the same dataset satisfies $(2 \epsilon, 2 \delta)$-differential privacy. This is important for rumor spreading as it enables to quantify privacy when the source propagates multiple rumors that the adversary can link to the same source (e.g., due to the content of the message). 
We will see in Section~\ref{sec:exp} that these properties are essential in practice to achieve robustness to concrete source location attacks.

Existing differentially private protocols typically introduce additional \emph{perturbation} (also called \emph{noise}) to hide critical information \cite{Dwork2014a}. In contrast, an original aspect of our work is that we will solely rely on the \emph{natural} randomness and limited observability brought by gossip protocols to guarantee differential privacy.

\section{A Model of Differential Privacy for Gossip Protocols}
\label{sec:model}

Our first contribution is a precise mathematical framework for studying the fundamental privacy guarantees of gossip protocols. We formally define the inputs of the gossip protocols introduced in Definition~\ref{def:push_algo}, 
the outputs observed by the adversary during their execution, and the privacy notions we investigate. 
To ease the exposition, we adopt the terminology of information dissemination, but we sometimes illustrate the ideas in the context of epidemics.

\subsection{Inputs and Outputs}
As described in Section~\ref{sec:related_priv}, differential privacy is a probabilistic notion that measures the privacy guarantees of a protocol based on the variations of its \emph{output} distribution for a change in its \emph{input}. In this paper, we adapt it to our gossip context. We first formalize the \emph{inputs} and \emph{outputs} of gossip protocols, in the case of a \emph{single piece of information to disseminate} (multiple pieces can be addressed through composition, see Section~\ref{sec:related_priv}). At the beginning of the protocol, a single node, the source, has the information (the gossip, or rumor). This node defines the input of the gossip protocol, and it is the actual ``database'' that we want to protect. Therefore, in our context, input databases in Equation~\eqref{eq:classicdp} have only $1$ record, which contains the identity of the source (an integer between $0$ and $n-1$). Therefore, all possible input databases differ in at most one record, and differential privacy aims at protecting the content of the database, i.e., which node started the rumor.

We now turn to the outputs of a gossip protocol.
The execution of a protocol generates an ordered sequence $S_{\rm omni}$ of pairs $(i, j)$ of calls to \verb+tell_gossip+ where $(S_{\rm omni})_t$ corresponds to the $t$-th time the \verb+tell_gossip+ primitive has been called, $i$ is the node on which \verb+tell_gossip+ was used and $j$ the node that was told the information. If several calls to \verb+tell_gossip+ happen simultaneously, ties are broken arbitrarily. We assume that the messages are received in the same order that they are sent. This protocol can thus be seen as an epidemic population protocol model~\cite{angluin2008fast} in which nodes interact using \verb+tell_gossip+.
The sequence $S_{\rm omni}$ corresponds to the output that would be observed by an omniscient entity who could eavesdrop on all communications. It is easy to see that, for any execution, the source can be identified exactly from $S_{\rm omni}$ simply by retrieving $(S_{\rm omni})_0$.

In this work, we focus on adversaries that monitor a set of \emph{curious nodes} $\mathcal{C}$ of size $f$, i.e. they observe all communications involving a curious node. This model, previously introduced in \cite{pinto2012locating}, is particularly meaningful in large distributed networks: while it is unlikely that an adversary can observe the full state of the network at any given time, compromising or impersonating a subset of the nodes appears more realistic. The number of curious nodes is directly linked with the \emph{release mechanism} of DP: while the final state of the system is always the same (everyone knows the rumor), the information released depends on which messages were received by the curious nodes during the execution.
Formally, the output disclosed to the adversary during the execution of the protocol, i.e., the information he can use to try to identify the source, is a subsequence of $S_{\rm omni}$ as defined below.
\begin{assumption}
The sequence $S$ observed by the adversary through the (random) execution of the protocol is a (random) subsequence $S = \left( (S_{\rm omni})_t | (S_{\rm omni})_t = (i, j) \hbox{ with } j \in \mathcal{C} \right)$, that contains all messages sent to curious nodes. The adversary has access to the relative order of tuples in $S$, which is the same as in $S_{\rm omni}$, but not to the index $t$ in $S_{\rm omni}$.
\end{assumption}
It is important to note that the adversary does not know which messages were exchanged between non-curious nodes. In particular, he does not know how many messages were sent in total at a given time.
As we focus on complete graphs, knowing which curious node received the rumor gives no information on the source node. For a given output sequence $S$, we write $S_t = i$ to denote that the $t$-th \verb+tell_gossip+ call in $S$ originates from node $i$. Omitting the dependence on $S$, we also denote $t_i(j)$ the time at which node $j$ first receives the message (even for the source) and $t_d(j)$ the time at which $j$ first communicates with a curious node. 

The ratio $f / n$ of curious nodes determines the probability of the adversary to gather information (the more curious nodes, the more information leaks). For a fixed $f$, the adversary only becomes weaker as the network grows bigger. Since we would like to study adversaries with fixed power, unless otherwise noted we make the following assumption.

\begin{assumption}
The ratio of curious nodes $f/n$ is constant.
\end{assumption}
Finally, we emphasize that we do not restrict the computational power of the adversary.

\subsection{Privacy Definitions}
\label{subsec:private_gossip}
We now formally introduce our privacy definitions. The first one is a direct application of differential privacy (Equation~\ref{eq:classicdp}) for the inputs and outputs specified in the previous section. To ease notations, we denote by $I_0$ the source of the gossip (the set of informed nodes at time $0$), and  for any given $i \in \{0, ..., n - 1\}$, we denote by  $p_i(E) = p(E | I_0 = \{i\})$ the probability of event $E$ if node $i$ is the source of the gossip. The protocol is therefore abstracted in this notation. Denoting by $\mathcal{S}$ the set of all possible outputs, we say that a gossip protocol is $(\epsilon,\delta)$-differentially private if:
\begin{equation}
\label{eq:dp}
p_i(S) \leq e^\epsilon p_j(S) + \delta, \ \ \forall S \subset \mathcal{S}, \ \forall i,j \in \{0, ..., n- 1\},
\end{equation}
where $p(S)$ is the probability that the output belongs to the set $S$. This formalizes a notion of \emph{source indistinguishability} in the sense that any set of output which is likely enough to happen if node $i$ starts the gossip (say, $p_i(S) \geq \delta$) is almost as likely (up to a $e^\epsilon$ multiplicative factor) to be observed by the adversary regardless of the source. Note however that when $\delta>0$, this definition can be satisfied for protocols that release the identity of the source (this can happen with probability $\delta$). To capture the behavior under unlikely events, we also consider the complementary notion of $c$-\emph{prediction uncertainty} for $c > 0$, which is satisfied  if for a uniform prior $p(I_0)$ on source nodes and any $i \in \{0, ..., n- 1\}$:
\begin{equation}
\label{eq:source_un}
p(I_0 \neq \{i \} | S) / p(I_0 = \{ i \} | S) \geq c,
\end{equation}
for any  $S \subset \mathcal{S}$ such that $p_i(S) > 0$. Prediction uncertainty guarantees that no observable output $S$ (however unlikely) can identify a node as the source with large enough probability: it ensures that the probability of success of any source location attack is upper bounded by $1/(1+c)$. This holds in particular for the maximum likelihood estimate. Prediction uncertainty does not have the robustness of differential privacy against background knowledge, as it assumes a uniform prior on the source. While it can be shown that $(\epsilon, 0)$-DP with $\epsilon>0$ implies prediction uncertainty, the converse is not true. Indeed, prediction uncertainty is satisfied as soon as no output identifies any node with enough probability, without necessarily making all pairs of nodes indistinguishable as required by DP. We will see that prediction uncertainty allows to rule out some naive protocols that have nonzero probability of generating sequences which reveal the source with certainty. 

Thanks to the symmetry of our problem, we consider without loss of generality that node $0$ starts the rumor ($I_0 = \{0\}$) and therefore we will only need to verify Equations~\eqref{eq:dp} and~\eqref{eq:source_un} for $i=0$ and $j=1$.

\section{Extreme Privacy Cases}
\label{sec:extreme_privacy}

In this section, we study the fundamental limits of gossip in terms of privacy. To do so, we parameterize 
gossip protocols by a muting parameter $s\in[0,1]$, as depicted in Algorithm~\ref{algo:gen_sync_push}. We thereby capture, within a generic framework,  a large family of protocols that fit Definition~\ref{def:push_algo} and work as follows. They maintain a set $A$ of \emph{active nodes} (initialized to the source node) which spread the rumor asynchronously and in parallel: this is modeled by the fact that at each step of the protocol, a randomly selected node $i\in A$ invokes the \verb+tell_gossip+ primitive to send the rumor to another node (which in turn becomes active), while $i$ also stays active with probability $s$. This is illustrated in Figure~\ref{fig:illustr}.
The muting parameter $s$ can be viewed as a randomized version of \emph{fanout} in \cite{eugster2004epidemic}.\footnote{Unlike in the classic fanout, nodes start to gossip again each time they receive a message instead of deactivating permanently.}
Algorithm~\ref{algo:gen_sync_push} follows the population protocol model~\cite{angluin2008fast}, and is also related to the SIS epidemic model \cite{SIS} but in which the rumor never dies regardless of the value of $s\in[0,1]$ (there always remain some active nodes). Although we present it from a centralized perspective, we emphasize that Algorithm~\ref{algo:gen_sync_push} is asynchronous and can be implemented by having active nodes wake up following a Poisson process.

In the rest of this section, we show that extreme privacy guarantees are obtained for extreme values of the muting parameter $s$.

\begin{figure*}[t]
 \centering
    \begin{minipage}{.5\textwidth}
        \centering
\begin{algorithm}[H]
	\caption{Parameterized Gossip}
	\label{algo:gen_sync_push}
	\begin{algorithmic}[1]
		\REQUIRE
		$n$ \COMMENT{Number of nodes}, $k$ \COMMENT{Source node}, $s$ \COMMENT{Probability for a node to remain active}
        \ENSURE $I=\{0,\dots,n-1\}$ \COMMENT{All nodes are informed}
		\STATE $I \leftarrow \{ k \}$, $A \leftarrow \{ k \}$
		\WHILE{$| I | < n$}
		\STATE{Sample $i$ uniformly at random from $A$}
		\STATE $A \leftarrow A \setminus \{i\}$ with probability $1-s$
		\STATE{$j, I \leftarrow$ \verb+tell_gossip+$(i, I)$, $A\leftarrow A\cup\{j\}$}
		\ENDWHILE
	\end{algorithmic}
\end{algorithm}
    \end{minipage}%
    \begin{minipage}{0.5\textwidth}
        \centering
\includegraphics[width=0.72\linewidth]{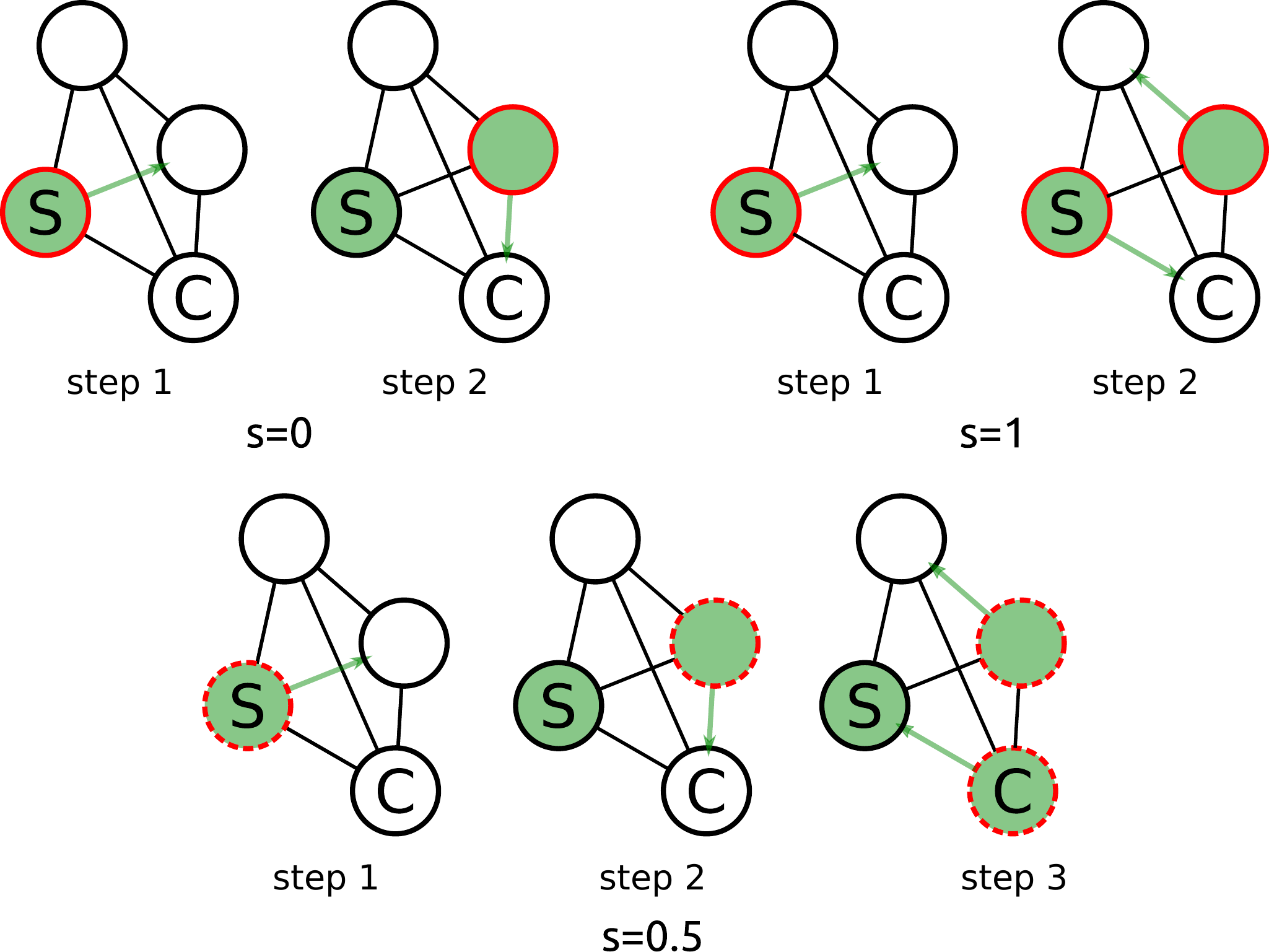}
    \end{minipage}
\caption{\emph{Left:} Parameterized Gossip. \emph{Right:} Illustration of the role of muting parameter $s$. \texttt{S} indicates the source and \texttt{C} a curious node. Green nodes know the rumor, and red circled nodes are active. When $s=0$, there is only one active node at a time, which always stops emitting after telling the gossip. 
In the case $s=1$, nodes always remain active once they know the rumor
(this is the standard push gossip protocol \cite{pittel1987spreading}). When $0<s<1$, each node remains active with probability $s$ after each communication.
}
\label{fig:illustr}
\end{figure*}

\subsection{Standard Push has Minimal Privacy}
\label{sec:push_not_private}
The natural case to study first in our framework is when the muting parameter is set to $s=1$: this corresponds to the standard push protocol \cite{pittel1987spreading} in which nodes always keep emitting after they receive the rumor. Theorem~\ref{cor:impossibility_sync} shows that, surprisingly, the privacy guarantees of this protocol become arbitrarily bad as the size of the graph increases (keeping the fraction of curious nodes constant). 


\begin{theorem}[Standard push is not differentially private]
\label{cor:impossibility_sync}
If Algorithm~\ref{algo:gen_sync_push} with $s=1$ guarantees $(\epsilon, \delta)$-DP for all values of $n$ and constant $\epsilon<\infty$, then $\delta = 1$.
\end{theorem}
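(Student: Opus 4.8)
The plan is to show that in standard push ($s=1$), with constant probability bounded away from zero (uniformly in $n$), the source node $0$ is the very first node to contact a curious node, whereas if a \emph{different} node $j$ were the source, the probability that node $0$ contacts a curious node first is $o(1)$. Combined with the fact that the adversary sees the originator of the first message in $S$, this produces an output set on which $p_0$ and $p_j$ differ by more than any fixed multiplicative factor $e^\epsilon$, forcing $\delta$ to absorb essentially all the probability mass, i.e. $\delta = 1$ in the limit (and hence $\delta=1$ since the DP requirement is assumed to hold for \emph{all} $n$).

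First I would fix the event $E$ = ``the first tuple of the observed sequence $S$ has $0$ as its emitting node,'' i.e. $S_1 = 0$. This is a well-defined subset of $\mathcal{S}$. Next I would lower bound $p_0(E)$: under $s=1$, before any curious node has been contacted, the informed-but-the-source set grows by push steps, and at each step the acting node contacts a curious node with probability $f/n$; the key point is that at the very first step it is necessarily node $0$ acting, so with probability $f/n$ already $S_1 = 0$. More carefully, since all non-source informed nodes keep emitting, the chance that the source manages to reach a curious node before the crowd does is at least a constant: condition on the size $m$ of the informed set at the time the first curious-node contact happens; the emitter is uniform among the $m$ active nodes only in the $s=1$ dynamics if one is careful — actually the cleaner route is to note that in the first $\Theta(\log n)$ steps the source acts $\Omega(1)$ times in expectation is false, so instead I would directly argue: the probability that node $0$'s \emph{first} push (which is step $1$) goes to a curious node is exactly $f/n$, and on that event $S_1 = 0$; hence $p_0(E) \ge f/n$, a positive constant by Assumption~2.

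Then I would upper bound $p_j(E)$ for $j \neq 0$. If $j$ is the source, node $0$ is initially uninformed; it must first receive the rumor (at time $t_i(0)$), and only afterwards can it emit to a curious node before anyone else does. But by the time $0$ is informed, push on the complete graph has already informed $\Theta(\log n)$-many other nodes with high probability in the early rounds, and more to the point: for $S_1 = 0$ we need node $0$ to be the emitter of the \emph{first} message ever delivered to a curious node. Before node $0$ is even informed, every push step independently hits a curious node with probability $f/n$; so conditioned on $0$ not yet being informed, the first curious contact has already occurred unless we have had a run of $\Omega(\log n)$ consecutive curious-free steps while simultaneously $0$ stays uninformed. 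I would show $p_j(E) = \mathcal{O}\big((1-f/n)^{\,c\log n}\big) = \mathcal{O}(n^{-c'})$ for suitable constants, using that it takes $\Omega(\log n)$ steps for a specific node to get informed by push (standard rumor-spreading lower bound: the probability that a fixed node is informed within the first $T$ steps is at most $T/n$, so within $o(n)$ steps it is $o(1)$, and conditioned on that, a curious node was hit with overwhelming probability). Plugging into Equation~\eqref{eq:dp} with $i=0$, $j=1$: $f/n \le p_0(E) \le e^\epsilon p_1(E) + \delta \le e^\epsilon \cdot o(1) + \delta$, so $\delta \ge f/n - o(1)$; letting $n\to\infty$ along the family forces $\delta$ arbitrarily close to... only $f/n$, not $1$. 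So I would need a stronger separating event.

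To reach $\delta = 1$ rather than merely $\delta \ge f/n$, I would iterate the idea over a \emph{union of events indexed by which non-curious node first contacts a curious node}, or more simply take $E$ = ``the emitter of the first message in $S$ is a node that was informed very early,'' together with the observation that when $0$ is the source it is, deterministically, the earliest-informed node, so it is overwhelmingly likely to be the first-ever emitter to a curious node \emph{given that the first curious contact comes from a ``young'' informed node} — which itself happens with probability $1-o(1)$ because the first curious contact occurs within $\mathcal{O}(n/f)=\mathcal{O}(1)\cdot n$ steps, while at that time only $\mathcal{O}(\log n)$ nodes are informed. The cleanest formulation: let $E$ be the event $\{S_1 = v\}$ where $v$ is a node informed before time $t^\star$ for an appropriate threshold; show $p_0$ of this is $1-o(1)$ while, by symmetry over the $n-1$ non-source labels and the fact that from $j$'s perspective node $0$ is just one of many nodes informed ``in the middle,'' the probability that $S_1$ equals that particular label $0$ is $o(1)$ uniformly. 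Summing/relating these yields $e^\epsilon = \omega(1)$ unless $\delta \to 1$. \emph{The main obstacle} is exactly this last step: controlling the joint law of ``who is informed when the first curious contact occurs'' under the two sources simultaneously, and turning the $f/n$ gap into a $1$ gap — this requires either a careful coupling of the two push processes or a clever choice of the event $S$ (likely a set that is nearly certain under source $0$ yet, by a counting/symmetry argument, has vanishing probability under source $1$), and getting the rates ($\log n$ vs.\ $n$ time scales, concentration of the informed-set size) to line up so that the $e^\epsilon$ factor cannot rescue any fixed $\delta<1$.
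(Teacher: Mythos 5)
Your first step (the event $E=\{S_1=0\}$, giving $p_0(E)\ge f/n$ and $p_1(E)=o(1)$, hence $\delta\ge f/n$) is correct and is essentially the lower-bound argument the paper uses elsewhere (Theorem~\ref{thm:push_lb}). But you correctly diagnose that this only yields $\delta\ge f/n$, and the step you need to get from there to $\delta=1$ is exactly the one you leave open (``the main obstacle is exactly this last step''). The candidate events you then sketch do not close it: ``the first emitter in $S$ is a node informed very early'' is nearly certain under \emph{both} sources, so it does not separate them; and the single event $\{S_1=0\}$ cannot have $p_0$-probability $1-o(1)$, since under source $0$ the first disclosure occurs after a geometric number of steps with constant parameter $f/n$, by which time $\Theta(1)$ other nodes are already informed and competing to be the first discloser (note also that $n/f$ is $\Theta(1)$ under Assumption~2, not $\Theta(n)$ as you write). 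So the proposal has a genuine gap.

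The paper's missing idea is to enlarge the event in the \emph{disclosed-sequence time} rather than over which node discloses first: take $S_r^{(0)}=\{S : S_t=0 \text{ for some } t\le r\}$, i.e.\ node $0$ appears among the first $r$ emitters observed by the curious nodes. Under source $0$, node $0$ is informed from the start and at global step $t$ is the selected emitter with probability at least $1/(t+1)$ (at most one new node is informed per step), so
$p_0(S_r^{(0)})\ge 1-\prod_{t=0}^{r}\bigl(1-\tfrac{f}{n}\tfrac{1}{t+1}\bigr)\ge 1-\exp\bigl(-\tfrac{f}{n}\sum_{t=0}^{r}\tfrac{1}{t+1}\bigr)$,
which tends to $1$ as $r\to\infty$ by divergence of the harmonic series, uniformly in $n$ since $f/n$ is constant. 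Under source $1$, the total number of communications $n_c(r)$ needed to produce $r$ disclosures is stochastically dominated by a quantity $k^*$ independent of $n$ (each communication is a disclosure with constant probability $f/n$), so node $0$ is informed within those $k^*$ steps with probability $O(k^*/n)\to 0$; hence $p_1(S_r^{(0)})\to 0$ as $n\to\infty$ for fixed $r$. Choosing $r$ large and then $n$ large forces $\delta\ge p_0(S_r^{(0)})-e^\epsilon p_1(S_r^{(0)})\to 1$. The essential trick you are missing is this order of limits ($r$ first, then $n$) combined with the harmonic-sum lower bound; no coupling of the two push processes is needed.
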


This result may seem counter-intuitive at first since one could expect that it would be more and more difficult to locate the source when the size of the graph increases. Yet, since the ratio of curious nodes is kept constant, this result comes from  the fact that the event $\{t_d(0) \leq t_i(1)\}$ (node $0$ communicates with a curious node before node $1$ gets the message) becomes more and more likely as $n$ grows, hence preventing any meaningful differential privacy guarantee when $n$ is large enough. The proof is in Appendix~\ref{app:proofs_faster_push}.
Theorem~\ref{cor:impossibility_sync} clearly highlights the fact that the standard gossip protocol ($s=1$) is not differentially private in general. We now turn to the other extreme case, where the muting parameter $s=0$.

\subsection{Muting After Infecting has Maximal Privacy}
\label{sec:optimal-alg}

We now study the privacy guarantees of generic Algorithm~\ref{algo:gen_sync_push} when $s=0$. In this protocol, nodes forward the rumor to exactly one random neighbor when they receive it and then stop emitting until they receive the rumor again. Intuitively, this is good for privacy: the source changes and it is quickly impossible to recover which node started the gossip (as initial conditions are quickly forgotten). In fact, once the source tells the rumor once, the state of the system (the set of active nodes, which in this case is only one node) is completely independent from the source. A similar idea was used in the protocol introduced in \cite{fanti2017hiding}.

The following result precisely quantifies the privacy guarantees of Algorithm~\ref{algo:gen_sync_push} with parameter $s=0$ and shows that it is \emph{optimally private} among all gossip protocols (in the precise sense of Definition~\ref{def:push_algo}).


\begin{theorem}
\label{thm:push_lb}
Let $\epsilon\geq 0$. For muting parameter $s=0$, Algorithm~\ref{algo:gen_sync_push} satisfies $(\epsilon, \delta)$-differential privacy with $\delta = \frac{f}{n} \left( 1 - \frac{e^\epsilon - 1}{f} \right)$ and $c$-prediction uncertainty with  $c = \frac{n}{f +1} - 1$.
Furthermore, these privacy guarantees are optimal among all gossip protocols.
\end{theorem}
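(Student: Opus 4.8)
The plan is to first reduce the $s=0$ instance of Algorithm~\ref{algo:gen_sync_push} to a transparent chain. When $s=0$ the active set $A$ always has size one: writing $v_t$ for the unique active node after $t$ calls to \verb+tell_gossip+, we have $v_0=k$ (the source) while $v_1,v_2,\dots$ are i.i.d.\ uniform on $\{0,\dots,n-1\}$, since the $t$-th call is made by $v_{t-1}$ and its recipient $v_t$ is a fresh uniform node. The $t$-th transmission is the pair $(v_{t-1},v_t)$, the informed set after $t$ steps is $\{v_0,\dots,v_t\}$, the loop terminates at the coupon-collector time $T=\min\{t:\{v_0,\dots,v_t\}=\{0,\dots,n-1\}\}$, and the adversary's observation is $S=\big((v_{t-1},v_t):1\le t\le T,\ v_t\in\mathcal{C}\big)$. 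I would then record the symmetry of this process under any permutation of the nodes that fixes $\mathcal{C}$ setwise (the coupon-collector stopping rule included), from which $p_i(S)=p_{i'}(S)$ whenever the sequence $S$ does not involve $i$ or $i'$ as a \emph{sender}. This localizes all privacy loss to observations that mention the source label.

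For the $(\epsilon,\delta)$-DP claim I would invoke the discrete characterization: for sources $0$ and $1$, $(\epsilon,\delta)$-DP is equivalent to $\sum_{S}\big(p_0(S)-e^{\epsilon}p_1(S)\big)^{+}\le\delta$. By the symmetry above, the only observations that can contribute are those involving node $0$ as a sender, and the decisive family $D$ consists of those whose \emph{first} observed pair has node $0$ as its sender. A direct computation on the chain above --- splitting according to whether that pair was node $0$'s very first transmission (probability $\tfrac1n$) or node $0$ was revisited after a $\mathcal{C}$-avoiding excursion (a geometric series summing to $\tfrac1{nf}$) --- shows that $p_0(S)=(f+1)\,p_1(S)$ for $S\in D$, with $p_0(D)=\tfrac{f+1}{n}$ and $p_1(D)=\tfrac1n$, while $p_0(S)\le p_1(S)$ for $S\notin D$ (matching a source-$0$ excursion that returns to $0$ with a source-$1$ excursion that first reaches $0$). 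Summing gives $\sum_S(p_0(S)-e^{\epsilon}p_1(S))^{+}=(f+1-e^{\epsilon})^{+}\,p_1(D)=\tfrac1n(f+1-e^{\epsilon})=\tfrac fn\big(1-\tfrac{e^{\epsilon}-1}{f}\big)$, the claimed $\delta$.

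For $c$-prediction uncertainty I would compute the posterior directly. Fix $S$ with $p_0(S)>0$; note $S$ is non-empty because each curious node must eventually receive the rumor. By the symmetry, all sources $j\neq 0$ that are not the first sender in $S$ share a common value $p_j(S)$, and the computation above gives $p_0(S)\le(f+1)\,p_j(S)$ for each of them; handling the few sources that do appear as senders separately, one obtains $\sum_j p_j(S)\ge p_0(S)+(n-1)\tfrac{p_0(S)}{f+1}\ge\tfrac{n}{f+1}p_0(S)$, hence $p(I_0=\{0\}\mid S)\le\tfrac{f+1}{n}$, i.e.\ $c=\tfrac{n}{f+1}-1$. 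For optimality among all gossip protocols (Definition~\ref{def:push_algo}): in \emph{any} such protocol the first call to \verb+tell_gossip+ is necessarily made by the source, being the only initially informed node, and its recipient is uniform, so with probability exactly $\tfrac fn$ the source transmits first to a curious node --- an event on which, when node $0$ is the source, node $0$ is the sender of the adversary's first observation. Taking $S$ to be the set of observations in which node $0$ is the sender of the first pair then gives $p_0(S)\ge\tfrac{f+1}{n}$ for any protocol (the extra $\tfrac1n$ coming from the transmissions forced by the termination requirement) and $p_1(S)\le\tfrac1n$ by symmetry over the $n-f$ non-curious non-source nodes, whence $\delta\ge p_0(S)-e^{\epsilon}p_1(S)\ge\tfrac fn\big(1-\tfrac{e^{\epsilon}-1}{f}\big)$ and, via the same posterior bound, $c\le\tfrac{n}{f+1}-1$.

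The step I expect to be hardest is the probability bookkeeping underlying the identities $p_0(S)=(f+1)p_1(S)$ on $D$ and $p_0(S)\le p_1(S)$ off $D$ (and the analogous posterior estimate): because different sources leave behind slightly different informed sets, the coupon-collector termination times --- and hence the continuations after the first curious contact --- do not coincide exactly, so these statements need a careful coupling or an explicit summation over excursion lengths rather than a one-line symmetry argument; pinning down the constants in the optimality direction exactly, rather than up to an $O(1/n)$ error, is the other delicate point.
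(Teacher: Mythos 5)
Your proposal is correct and follows essentially the same route as the paper's proof: both arguments hinge on the event that the source is the sender of the first observed pair (your $D$ is the paper's $S^{(0)}$), establish $p_0(D)=\tfrac{f+1}{n}$ and $p_1(D)=\tfrac{1}{n}$ (your explicit geometric series over $\mathcal{C}$-avoiding excursions replaces the paper's one-step symmetry argument $p_0(S_0=0\mid T_0^c)=\tfrac{1}{n-f}$, but yields the same numbers), and prove optimality via the identical observation that the first \verb+tell_gossip+ call in any gossip protocol is made by the source with a uniformly random recipient. The termination-time subtlety you flag at the end is likewise present (and glossed over) in the paper's own argument, so it is not a gap relative to the reference proof.
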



\begin{proof}[Proof of Theorem~\ref{thm:push_lb}] We start by proving the fundamental limits on the privacy of any gossip protocol, and then prove matching guarantees for Algorithm~\ref{algo:gen_sync_push} with $s=0$.

\textbf{(Fundamental limits in privacy)} Proving a lower bound on the differential privacy parameters can be achieved by finding a set of possible outputs $S$ (here, a set of ordered sequences) such that $p_0(S) \geq p_1(S)$. Indeed, a direct application of the definition of Equation~\eqref{eq:dp} yields that given any gossip protocol, $S \subset \mathcal{S}$ and $w_0, w_1 \in \mathbb{R}$ such that $w_0 \leq p_0(S)$ and $p_1(S) \leq w_1$, if the protocol satisfies $(\epsilon, \delta)$ differential privacy then $\delta \geq w_0 - e^\epsilon w_1$. The proofs need to consider all the messages sent and then distinguish between the ones that are disclosed (sent to curious nodes) and the ones that are not.  

Since $I = \{ 0 \}$ then \verb+tell_gossip+ is called for the first time by node $0$ and it is called at least once otherwise the protocol terminates with $I = \{ 0 \}$, violating the conditions of Definition~\ref{def:push_algo}. We denote by $S^{(0)}$ the set of output sequences such that $S_0 = 0$ (i.e., $0$ is the first to communicate with a curious node). We also define the event $T_0^c = \{ t_d(0) \neq 0\}$  (the source does not send its first message to a curious node). For all $i \notin \mathcal{C} \cup \{0\}$, we have that $p_0(S_0 = i | T_0^c) \leq p_0(S_0 = 0 | T_0^c)$ since $p_0(A_1 = \{0\}) = p_0(i \in A_1)$, where $A_1$ is the set of active nodes at time $1$. From this inequality we get
\begin{align*}
\textstyle\sum_{i \notin \mathcal{C}} p_0(S_0 = 0 | T_0^c) \geq \sum_{i \notin \mathcal{C}} p_0(S_0 = i | T_0^c) =1 \geq \sum_{i \notin \mathcal{C}} p_0(S_0 = 1 | T_0^c),
\end{align*}
where the equality comes from the fact that $S_0 = i$ for some $i \notin \mathcal{C}$. The second inequality comes from the fact that $p_j(S_0 = {i} | T_0^c) = p_j(S_0 = {k} | T_0^c)$ for all $i,k \neq j$. Therefore, we have $p_0(S_0 = 0 | T_0^c) \geq \frac{1}{n - f}$ and $p_0(S_0 = 1 | T_0^c) \leq \frac{1}{n - f}$.
Combining the above expressions, we  derive the probability of $S^{(0)}$ when $0$ started the gossip. We write $p_0(S^{(0)}) = p_0(S^{(0)}, t_d(0) = 0) + p_0(S^{(0)}, T_0^c )$ and then, since $p_0\big(S^{(0)} | t_d(0) = 0\big) = 1$:
\begin{align*}
p_0\big(S^{(0)}\big) = p_0\big(t_d(0) = 0\big) p_0\big(S^{(0)} | t_d(0) = 0\big) + p_0\big(S^{(0)} | T_0^c \big)p_0\big(T_0^c \big) \geq \frac{f}{n} + \frac{1}{n - f}\Big(1 - \frac{f}{n}\Big)
\end{align*}
In the end, $p_0(S^{(0)}) \geq \frac{f}{n} + \frac{1}{n}$. If node $1$ initially has the message, we do the same split and obtain $p_1(S^{(0)} | t_d(0) = 0) = 0$ and so $p_1(S^{(0)}) = p_1(T_0^c)p_1(S^{(0)} | T_0^c) \leq \frac{1}{n}$ .

The upper bound on prediction uncertainty is derived using the same quantities:
\begin{align*}
\frac{p(I_0 \neq 0 | S^{(0)})}{p(I_0 = 0 | S^{(0)})} = \sum_{i \notin C \cup \{0\}} \frac{p_i(S^{(0)})}{p_0(S^{(0)})} \leq (n - f - 1) \frac{p_1(S^{(0)})}{p_0(S^{(0)})}
\leq \frac{n - f - 1}{f + 1} = \frac{n}{f + 1} - 1.
\end{align*}
Note that we have never assumed that curious nodes knew how many messages were sent at a given point in time. We have only bounded the probability that the source is the first node that sends a message to curious nodes.

\textbf{(Matching guarantees for Algorithm~\ref{algo:gen_sync_push} with $s=0$)}
For this protocol, the only outputs $S$ such that $p_0(S) \neq p_1(S)$ are those in which $t_d(0) = 0$ or $t_d(1) = 0$. We write:
\begin{equation*}
p_0(S_0 = 0) = p_0(t_d(0) = 0)p_0(S_0 = 0 | t_d(0) = 0) +  p_0(T_0^c)p_0(S_0 = 0 | T_0^c).
\end{equation*}
For any $i \notin \mathcal{C}$ where $\mathcal{C}$ is the set of curious nodes, we have that $p_0(S_0 = 0 | T_0^c) = p_0(S_0 = i | T_0^c) = \frac{1}{n -f}$. Indeed, given that $t_d(0) \neq 0$, the node that receives the first message is selected uniformly at random among non-curious nodes, and has the same probability to disclose the gossip at future rounds. Plugging into the previous equation, we obtain:
\begin{equation*}
p_0(S_0 = 0) = \frac{f}{n} +  \Big(1 - \frac{f}{n}\Big) \frac{1}{n - f} = \frac{f + 1}{n}.
\end{equation*}
For any other node $i\notin \mathcal{C} \cup \{0\}$,
$p_0(S_0 = i) = p_0(T_0^c) p_0(S_0 = i | T_0^c) = \frac{1}{n}$ because $p_0(S_0 = i | t_d(0) = 0) = 0$. Combining these results we get $p_0(S^{(0)}) \leq e^\epsilon p_1(S^{(0)}) + \delta$
for any $\epsilon>0$ and $\delta=\frac{f}{n} ( 1 - \frac{e^\epsilon - 1}{f})$. By symmetry, we make a similar derivation for $S^{(1)}$.

To prove the prediction uncertainty result, we use the differential privacy result with $e^\epsilon = f + 1$ (and thus $\delta = 0$) and write that for any $S \in \mathcal{S}$:
\begin{equation*}
\frac{p(I_0 \neq 0 | S)}{p(I_0 = 0 | S)} = \sum_{i \notin C \cup \{0\}} \frac{p_i(S)}{p_0(S)} \geq (n - f - 1)e^{-\epsilon} = \frac{n}{f + 1} - 1.\qedhere
\end{equation*}
\end{proof}

Theorem~\ref{thm:push_lb} establishes \emph{matching upper and lower bounds} on the privacy guarantees of gossip protocols. More specifically, it shows that setting the muting parameter to $s=0$ provides strong privacy guarantees that are in fact optimal.
Note that in the regime where $\epsilon=0$ (where DP corresponds to the total variation distance), $\delta$ cannot be smaller than the proportion of curious nodes. This is rather intuitive since the source node has probability at least $f/n$ to send its first message to a curious node.
However, one can also achieve differential privacy with $\delta$ much smaller than $f/n$ by trading-off with $\epsilon>0$. In particular, the \emph{pure} version of differential privacy ($\delta = 0$) is attained for $\epsilon \approx \log f$, which provides good privacy guarantees when the number of curious nodes is not too large. Furthermore,
even though the probability of disclosing \emph{some} information is of order $f / n$, prediction uncertainty guarantee shows that an adversary with uniform prior always has a high probability of making a mistake when predicting the source. Crucially, these privacy guarantees are made possible by the \emph{natural} randomness and partial observability of gossip protocols. 



\begin{remark}[Special behavior of the source]
\label{rem:naive}
A subtle but key property of Algorithm~\ref{algo:gen_sync_push} is that the source follows the same behavior as other nodes. To illustrate how violating this property may give away the source, consider this natural protocol: the source node transmits the rumor to one random node and stops emitting, then standard push (Algorithm~\ref{algo:gen_sync_push} with $s=1$) starts from the node that received the information. While this \emph{delayed start gossip protocol} achieves optimal differential privacy in some regimes, it is fundamentally flawed. In particular, it does not guarantee prediction uncertainty in the sense that $c \rightarrow 0$ as the graph grows. Indeed, the adversary can identify the source with high probability by detecting that it communicated only once and then stopped emitting for many rounds. We refer to 
 Appendix~\ref{app:weak_privacy_protocol} 
  for the formal proof.
\end{remark}

\section{Privacy vs. Speed Trade-offs}
\label{sec:faster_push}

While choosing $s=0$ achieves optimal privacy guarantees, an obvious drawback is that it leads to a very slow protocol since only one node can transmit the rumor at any given time. 
It is easy to see that the number of gossip operations needed 
to inform all nodes can be reduced to the time needed for the classical coupon collection problem: it takes $\mathcal{O}(n \log n)$ communications to inform all nodes with probability at least $1 - 1/n$  \cite{erdHos1961classical}. As this protocol performs exactly one communication at any given time, it needs time $\mathcal{O}(n \log n )$ to inform all nodes with high probability. This is in stark contrast to the standard push gossip protocol ($s=1$) studied in Section~\ref{sec:push_not_private} where all informed nodes can transmit the rumor in parallel, requiring only time $\mathcal{O}(\log n)$~\cite{frieze1985shortest}.

These observations motivate the exploration of the privacy-speed trade-off (with parameter $0<s<1$).
We first show below that nearly optimal privacy can be achieved for small values of $s$. Then, we study the spreading time and show that the $\mathcal{O}(\log n)$ time of the standard gossip protocol also holds for $s>0$, leading to a sweet spot in the privacy-speed trade-off.

\subsection{Privacy Guarantees}
\label{sec:privacy_fast_private_gossip}

Theorem~\ref{thm:sync_push_s} conveys a $(0, \delta)$-differential privacy result, which means that apart from some unlikely outputs that may disclose the identity of the source node, most of these outputs actually have the same probability regardless of which node triggered the dissemination.
We emphasize that the guarantee we obtain holds for any graph size with fixed proportion $f/n$ of curious nodes.

\begin{theorem}[Privacy guarantees for $s<1$]
\label{thm:sync_push_s}
For $0 < s < 1$ and any fixed $r \in \mathbb{N}^*$, Algorithm~\ref{algo:gen_sync_push} with muting parameter $s$ guarantees $(0, \delta)$-differential privacy with: 
\begin{equation*}
\delta = 1 - (1 - s)\sum_{k=0}^\infty s^k \left(1 - \frac{f}{n}\right)^{k+1} \leq 1 - (1 - s^r)\left(1 - \frac{f}{n}\right)^{r}.
\end{equation*}
For example, choosing $r=1$ leads to $\delta \leq s + (1 - s)\frac{f}{n}$, as reported in Table~\ref{tab:privacy_speed_tradeoffs}. Slightly tighter bounds can be obtained, but this is enough already to recover optimal guarantees as $s \rightarrow 0$.
\end{theorem}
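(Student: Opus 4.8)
The plan is to reduce $(0,\delta)$-differential privacy to a total-variation bound, isolate the single event through which the source can betray itself, compute its probability exactly, and show that conditioned on its complement the adversary's view carries no information about the source. By the symmetry reduction already in place (it suffices to treat $i=0$, $j=1$), Equation~\eqref{eq:dp} with $\epsilon=0$ is exactly $d_{\mathrm{TV}}(p_0,p_1)\le\delta$, i.e.\ $p_0(S)-p_1(S)\le\delta$ for every set $S$ of observation sequences; this is what I would prove.

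Let $E$ be the event that the source sends a message to a curious node during its \emph{initial active spell} --- between the start of the protocol and the first time it leaves $A$ (if it never leaves $A$ before termination, read this as ``at any point''). I would then establish two facts: (i) $p_0(E)=\delta$, and (ii) $p_0(S\cap E^c)\le p_1(S)$ for every $S$. Granting both, $p_0(S)=p_0(S\cap E)+p_0(S\cap E^c)\le p_0(E)+p_1(S)=\delta+p_1(S)$, which is the desired inequality. Fact (i) is a short computation: each time the source is selected from $A$ it sends exactly one message to a uniformly random node and then remains active with probability $s$, independently of that message's destination; hence the number $K$ of messages it sends in its initial spell is geometric, $\Pr[K=k]=s^{k-1}(1-s)$ for $k\ge 1$, and conditionally on $K$ its destinations are i.i.d.\ uniform, each curious with probability $f/n$. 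Therefore $\Pr[E^c]=\mathbb{E}\big[(1-f/n)^{K}\big]=(1-s)\sum_{k\ge 0}s^{k}(1-f/n)^{k+1}$, giving $\delta=\Pr[E]=1-(1-s)\sum_{k\ge 0}s^{k}(1-f/n)^{k+1}$, exactly as stated. The explicit bound then follows by truncating the series: $\Pr[E^c]\ge\sum_{k=1}^{r}s^{k-1}(1-s)(1-f/n)^{k}\ge (1-f/n)^{r}\sum_{k=1}^{r}s^{k-1}(1-s)=(1-f/n)^{r}(1-s^{r})$, so $\delta\le 1-(1-s^{r})(1-f/n)^{r}$, which for $r=1$ reads $\delta\le s+(1-s)\tfrac{f}{n}$.

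The hard part is fact (ii). The intuition is that on $E^c$ the source emits no observed message during the only phase in which it plays a distinguished role, and once that phase ends it is simply one more informed node governed by the label-invariant transition rule of Algorithm~\ref{algo:gen_sync_push}; hence, conditioned on $E^c$, the adversary's observed sequence has the same law whether the source is node $0$ or node $1$, so $p_0(S\cap E^c)=p_1(S\cap E'^c)\le p_1(S)$, where $E'$ is the event analogous to $E$ for source $1$. Turning this into a proof requires a coupling of the two executions, and I expect it to be the main obstacle: the naive transposition of nodes $0$ and $1$ is insufficient, since (conditioning only on $E^c$) the \emph{non}-source among $\{0,1\}$ can itself generate observed messages, so the coupling must be more careful; moreover it has to cope with the arbitrary interleaving of the source's initial spell with the actions of the nodes it has already informed, and in particular argue that whether a later message happens to reach the already-informed former source or a genuinely new node cannot be leveraged by the adversary on a complete graph. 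Everything else --- the reduction, the geometric identity for $\delta$, and the truncation bound --- is routine bookkeeping.
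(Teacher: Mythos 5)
Your decomposition is exactly the one in the paper: your event $E$ is the paper's $F=\{t_d(0)\le t_m\}$ (the source contacts a curious node before it first mutes), your geometric computation $\Pr[E]=1-(1-s)\sum_{k\ge 0}s^k(1-f/n)^{k+1}$ and its truncation to $1-(1-s^r)(1-f/n)^r$ coincide with the paper's computation of $p_0(F)$ via the geometric variable $T_f$, and your split $p_0(S)\le p_0(S\cap E^c)+\Pr[E]$ is the paper's $p_0(S)=p_0(S,F^c)+p_0(S,F)\le p_1(S)+p_0(F)$. So this is not a different route.

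The problem is that your ``fact (ii)'', $p_0(S\cap E^c)\le p_1(S)$, is the only step of the argument that carries content beyond bookkeeping, and it is precisely the step you do not prove: as written, the proposal reduces the theorem to an unproven domination claim together with an honest list of reasons why the obvious transposition coupling fails. The paper closes this step not with a full coupling but with a short stochastic-domination observation, after first disposing of the easy case $t_d(0)\ge t_d(1)$ (where $p_0(S)\le p_1(S)$ because node $0$ must be informed before it can disclose). Namely, conditioned on $F^c$, node $0$ can only reach a curious node after being (re-)infected by the ongoing, label-symmetric process: $t_d(0)\ge\max(t_m,t_i(0))$ when $0$ is the source versus $t_d(0)\ge t_i(0)$ when it is not, so once its silent initial spell ends the source is subject to at least the same delay as any freshly informed node, and every observation is at most as likely under $p_0(\cdot\mid F^c)$ as under $p_1(\cdot\mid F^c)$. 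You need to supply some version of this argument (or the coupling you sketch); without it your proof establishes only that $\delta$ equals the probability of a certain event, not that this event is the sole source of distinguishability.
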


\begin{proof}
We first consider that $S$ is such that $t_d(0) \geq t_d(1)$. Then, $p_0(S) \leq p_1(S)$ since node $0$ needs to receive the rumor before being able to communicate it to curious nodes, and Equation~\eqref{eq:dp} is verified. Suppose now that $S$ is such that $t_d(0) \leq t_d(1)$. In this case, we note $t_m$ the first time at which the source stops to emit (which happens with probability $1-s$ each time it sends a message). Then, we denote $F = \{t_d(0) \leq t_m\}$ (and $F^c$ its complement). In this case, $p_0(S | F^c) \leq p_1(S | F^c)$. Indeed, conditioned on $F^c$, $t_d(0) \geq t_i(0)$ if node $0$ is not the source and $t_d(0) \geq \max(t_m, t_i(0))$ if it is. Then, we can write:
\begin{align*}
p_0(S) &= p_0(S, F^c) + p_0(S, F) \leq p_1(S, F^c) + p_0(F) \leq p_1(S) + p_0(F).
\end{align*}
Denoting $T_f$ the number of messages after which the source stops emitting, we write: 
\begin{align*}
p_0(F) &= \sum_{k=1}^\infty p_0(T_f = k)p_0( F | T_f = k) = \sum_{k=0}^\infty (1 - s) s^k \Big(1 - \big(1 - \frac{f}{n}\big)^{k+1}\Big), \hbox{ for }  s>0.
\end{align*}
Note that we can also write for $k \geq 1$ that $p_0(F) = p_0(F, T_f \leq k) + p_0(F, T_f > k)$, and so:
\begin{equation*}
p_0(F) \leq (1 - s^k)\Big(1 - \big(1 - \frac{f}{n}\big)^{k}\Big) + s^k = 1 - (1 - s^k)\Big(1 - \frac{f}{n}\Big)^{k}. \qedhere
\end{equation*}
\end{proof}

The differential privacy guarantees given by Theorem~\ref{thm:sync_push_s} and the optimal guarantees of Theorem~\ref{thm:push_lb} are of the same order of magnitude when $s$ is of order $f/n$. Indeed, consider $\epsilon = 0$. Then, setting $r=1$ in Theorem~\ref{thm:sync_push_s} leads to an additive gap of $s(1 - f/n)$ between the privacy of Algorithm~\ref{algo:gen_sync_push} and the optimal guarantee, showing that one can be as close as desired to the optimal privacy as long as $s$ is chosen close enough to $0$. In particular, the ratio between the privacy of Algorithm~\ref{algo:gen_sync_push} and the lower bound is less than $2$ for all $s \leq f/n$. This indicates that the privacy guarantees are very tight in this regime. We also recover exactly the optimal guarantee of Theorem~\ref{thm:push_lb} in the case $s=0$ (without the ability to control the trade-off between $\epsilon$ and $\delta$). 
Importantly, we also show that Algorithm~\ref{algo:gen_sync_push} with $s<1$ satisfies prediction uncertainty, unlike the case where $s=1$.

\begin{theorem}
\label{thm:source_uncertainty_private_push}
Algorithm~\ref{algo:gen_sync_push} guarantees prediction uncertainty with $c=(1-\frac{f+1}{n})(1-s)$.
\end{theorem}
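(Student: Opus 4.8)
The plan is first to recast the statement in a convenient ratio form, and then to establish it through an explicit coupling. By symmetry of the complete graph it is enough to treat the case $i=0$, as elsewhere in the paper. With a uniform prior over the candidate source nodes, the normalizing constants cancel in the ratio of Equation~\eqref{eq:source_un}, so the claim is equivalent to
\[
\sum_{j \notin \mathcal{C}\cup\{0\}} p_j(S) \;\geq\; c\, p_0(S) \qquad \text{for every observable } S \text{ with } p_0(S) > 0 .
\]
Since the sum ranges over $n-f-1$ nodes, it suffices to prove the single-node lower bound $p_j(S) \geq \tfrac{1-s}{n}\, p_0(S)$ for every non-curious $j\neq 0$: summing this over those nodes yields $(n-f-1)\tfrac{1-s}{n}\,p_0(S) = \big(1-\tfrac{f+1}{n}\big)(1-s)\,p_0(S) = c\,p_0(S)$, which is exactly what is needed.

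To prove $p_j(S) \geq \tfrac{1-s}{n}\, p_0(S)$ I would couple a source-$j$ execution of Algorithm~\ref{algo:gen_sync_push} with a source-$0$ execution. In the source-$j$ run, with probability exactly $\tfrac1n(1-s)$ the first call to \verb+tell_gossip+ sends the rumor from $j$ to node $0$ and, at that same iteration, $j$ leaves the active set; conditioned on this, the configuration is $I=\{0,j\}$, $A=\{0\}$, i.e. node $0$ is now the unique active node. From here I would run the loop using the same sampled active node, the same muting coins and the same recipients as a source-$0$ execution started from $I=\{0\}$, $A=\{0\}$. Under this coupling node $0$ plays exactly the role the source plays in the source-$0$ run, the two active sets stay equal, and the two sequences of \verb+tell_gossip+ calls coincide; moreover the extra initial call $(j,0)$ (resp. $(0,j)$) is addressed to the non-curious node $0$ (resp. $j$), hence is absent from the observation, so the two observable subsequences also coincide. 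Conditioned on the first step above, the observation therefore equals $S$ with the same probability as in an ordinary source-$0$ run, giving $p_j(S) \geq \tfrac1n(1-s)\,p_0(S)$; summing over $j\notin\mathcal{C}\cup\{0\}$ and invoking the reduction above finishes the proof, and the general source $i$ follows by relabeling.

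The delicate point — and the step I expect to require the most care — is the verification that the two observable subsequences really do coincide under the coupling. The active-set dynamics (and hence the full sequence of \verb+tell_gossip+ calls) are a function of the sampled randomness that does not depend on the informed set $I$, which intervenes only through the termination test $|I|=n$. Because node $j$ is already informed in the source-$j$ continuation, its copy of $I$ equals the source-$0$ copy augmented by $\{j\}$, so the loop can stop no later — and strictly earlier only when $j$ happens to be the last node informed in the coupled source-$0$ run, in which case the skipped iterations are exactly those occurring once $j$ is the sole uninformed node. One then has to argue that this final-phase discrepancy does not affect the equality of the observed subsequences: the skipped delivery to $j$ itself is harmless since $j\notin\mathcal{C}$, and the remaining (redundant) skipped iterations can be absorbed, so that after accounting for the event ``$j$ is the last node informed'' the clean summation above still yields the stated constant $c=\big(1-\tfrac{f+1}{n}\big)(1-s)$.
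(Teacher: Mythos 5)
Your argument is correct and essentially the same as the paper's: the paper lower-bounds $p_i(S)$ by $p_i(A_1=\{0\})\,p_i(S \mid A_1=\{0\}) = \frac{1-s}{n}\,p_0(S)$ and sums over the $n-f-1$ non-curious candidates, which is exactly your coupling on the event that the source's first call informs node $0$ and the source deactivates. The termination-time discrepancy you flag at the end (the source-$j$ run can stop one informing event earlier when $j$ would be the last node informed) is a real boundary effect, but the paper simply asserts $p_i(S \mid A_1=\{0\})=p_0(S)$ without addressing it, so your treatment is if anything more careful than the published proof.
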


This result is another evidence that picking $s < 1$ allows to derive meaningful privacy guarantees. The proof can be found in  
 Appendix~\ref{app:proofs_faster_push}. 

\subsection{Spreading time}
\label{sec:diffusion_s}

We have shown that parameter $s$ has a significant impact on privacy, from optimal ($s=0$) to very weak ($s=1$) guarantees. Yet, $s$ also impacts the spreading time: the larger $s$, the more active nodes at each round. This is highlighted by the two extreme cases, for which the spreading time is already known and exhibits a large gap: $\mathcal{O}(\log n)$ for $s=1$ and $\mathcal{O}(n\log n)$ for $s=0$. To establish whether we can obtain a protocol that is both private and fast, we need to characterize the spreading time for the cases where $0<s<1$.

The key result of this section is to prove that the logarithmic speed of the standard push gossip protocol holds more generally for all $s>0$.  
This result is derived from the fact that the ability to forget does not prevent an \emph{exponential growth} phase. What changes is that the population of active nodes takes approximately $1/s$ rounds to double instead of $1$ for standard gossip.
For ease of presentation, we state below the result for the synchronous version of Algorithm~\ref{algo:gen_sync_push}, in which the notion of \emph{round} corresponds to iterating over the full set $A$.
A similar result (with an appropriate notion of rounds) can be obtained for the asynchronous version given in Algorithm~\ref{algo:gen_sync_push}.

\begin{theorem}
\label{thm:speed_s}
For a given $s > 0$ and for all $1 > \delta > 0$ and $C \geq 1$, there exists $n$ large enough such that the synchronous version of Algorithm~\ref{algo:gen_sync_push} with parameter $s$ sends at least $C n \log n$ messages in $6 C \log(n) / s$ rounds with probability at least $1 - \delta$.
\end{theorem}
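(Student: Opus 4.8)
The plan is to track $a_t$, the size of the active set $A$ at the start of round $t$ of the synchronous version of Algorithm~\ref{algo:gen_sync_push}. Since a round processes each currently active node exactly once and each makes one call to \verb+tell_gossip+, round $t$ sends exactly $a_t$ messages, so the total after $T$ rounds is $\sum_{t<T} a_t$. I would prove two facts and combine them: (i) a \emph{growth phase} of length at most $\tfrac{3\log n}{s}$ rounds (for $n$ large) after which $a_t$ has reached a constant fraction of the fixed point $x^\star(s)\,n$ described below; and (ii) a \emph{saturation phase} in which, from then on, $a_t \ge (1-o(1))\,x^\star(s)\,n \ge sn/3$ until the horizon $T = \tfrac{6C\log n}{s}$. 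Granting these, the last $T - \tfrac{3\log n}{s} \ge \tfrac{3C\log n}{s}$ rounds (the inequality uses $C\ge 1$) each contribute at least $sn/3$ messages, for a total of at least $Cn\log n$, which is the claim.

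The core is the one-round drift. Starting a round with $|A| = a$, each active node independently stays active with probability $s$ and sends to a uniformly random node; the new active set consists of the surviving nodes — a $\mathrm{Bin}(a,s)$ number of them — together with every distinct target \emph{not already in $A$}, and these two sets are disjoint and independent. Hence $a_{t+1} \ge X + N$ with $X \sim \mathrm{Bin}(a,s)$ independent of $N$, and $\mathbb{E}[X+N] = sa + (n-a)\big(1-(1-1/n)^{a}\big)$. With $x = a/n$, the mean one-round map is $g(x) = sx + (1-x)(1-e^{-x}) + \mathcal{O}(1/n)$: its growth factor $g(x)/x$ tends to $1+s$ as $x \to 0$, it has a unique positive fixed point $x^\star(s)$ (root of $x(1-s) = (1-x)(1-e^{-x})$, with $x^\star(1)=1$ and $x^\star(s) \ge s/2$, indeed $x^\star(s) \approx \tfrac{2s}{3}$ for small $s$), and $g$ contracts toward $x^\star$.

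For concentration I would use Chernoff on $X$ ($X \ge (1-\gamma)sa$ except with probability $e^{-\gamma^2 sa/2}$) and the bounded-differences inequality on $N$, which is a function of $a$ i.i.d.\ coordinates with increments $1$ ($N \ge \mathbb{E}[N] - \lambda$ except with probability $e^{-2\lambda^2/a}$). Taking the deviations to be $o(a)$ — feasible once $a \ge \log^{3} n$ — a union bound over the $\mathcal{O}(\log(n)/s)$ rounds gives, except with probability $\delta/2$: the growth factor is at least $e^{s/2}$ while $a_t \le x_1 n$ for a suitable fixed $x_1 \in (0, x^\star)$; it is at least $1+\Omega(s)$ for $x_1 n \le a_t \le x^\star n$; and $a_{t+1} \ge (1-o(1))\min(a_t, x^\star n)$ whenever $a_t \ge x_1 n$. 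The growth phase then unfolds: starting from $a_t \ge n^{1/4}$, at least $e^{s/2}$-growth per round carries $a_t$ above $x_1 n$ in at most $\tfrac{2}{s}\ln\!\big(x_1 n^{3/4}\big) \le \tfrac{3\ln n}{2s}$ rounds, and a further $\mathcal{O}(1/s)$ rounds bring it above $sn/4$ (still below $x^\star n$); thereafter $a_t \ge (1-o(1))x^\star n \ge sn/3$ for all remaining rounds, by another union bound whose failure probability is crushed since deviations are $o(n)$ when $a_t = \Theta(n)$. Choosing $n$ large makes all $o(1)$ terms negligible, bounds the growth phase by $\tfrac{3\log n}{s}$, and keeps the total failure probability at most $\delta$.

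The main obstacle I expect is the \emph{startup} regime, where $a_t$ is too small for the per-round concentration bounds to bite and the trajectory is genuinely noisy. I would handle it by a crude stochastic-domination argument: from $a_t = k$, the active set reaches $k+1$ within a geometric-type number of rounds of mean $\mathcal{O}(1/s)$ (already $\mathbb{P}(a_{t+1} \ge k+1 \mid a_t = k) \ge s(1-1/n)$), so the time to reach $a_t = n^{1/4}$ is a sum of $\mathcal{O}(\log n)$ such waiting times, which concentrates around $\mathcal{O}(\log(n)/s)$ and is at most $\tfrac{2\ln n}{s}(1+o(1))$ except with probability $\delta/4$ — folded into the growth-phase budget. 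A secondary, bookkeeping-heavy difficulty is landing the explicit constant $6C$ under the quantifier order ``for every $C\ge 1$, \emph{then} $n$ large enough'': the computation above shows there is room, but it hinges on using the actual fixed-point value $x^\star(s) \ge s/2$ for the saturation rate rather than a cruder floor, and on checking that the deviation and lower-order $\mathcal{O}(1/s)$ terms are truly $o(\log(n)/s)$ — which is exactly where ``$n$ large enough'' enters.
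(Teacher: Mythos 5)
Your overall strategy is the same as the paper's: a mean-dynamics drift map for the fraction of active nodes, a saturation level of order $s$ (your fixed point $x^\star(s)\ge s/2$ plays the role of the paper's threshold $\alpha_s = s/(1+2s)$), per-round concentration with a union bound over the $\mathcal{O}(\log(n)/s)$ rounds, and a three-phase decomposition (startup, exponential growth, saturation) whose budgets add up to $6C\log(n)/s$. The only substantive difference is the concentration tool: the paper runs a single edge-exposure martingale over the messages of a round and applies Azuma with increments bounded by $2$, whereas you split $a_{t+1}$ into an independent sum of survivors ($\mathrm{Bin}(a,s)$, Chernoff) and fresh targets (McDiarmid). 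Both work and give comparable tail bounds; this part of your plan is sound.

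There is, however, a genuine gap in your startup phase, which you correctly identified as the delicate regime but then patched with an argument that does not hold. First, the claimed one-step bound $\mathbb{P}(a_{t+1}\ge k+1 \mid a_t=k)\ge s(1-1/n)$ is false for $k\ge 2$: the event that one designated active node survives and recruits a node outside $A_t$ only guarantees two active nodes at the next round, not $k+1$, because the other $k-1$ active nodes may all deactivate and send their messages back into $A_t$. Second, and more importantly, the chain $a_t$ is not monotone (it can decrease), so you cannot decompose the time to reach $n^{1/4}$ as a sum of $\mathcal{O}(\log n)$ geometric ``first time to go from $k$ to $k+1$'' waiting times; reaching level $k+1$ does not prevent falling back below it. What saves the phase is the positive multiplicative drift $\mathbb{E}[a_{t+1}\mid a_t]\ge (1+s/2)\,a_t$, but turning that into a high-probability bound when $a_t$ is of constant size requires an argument over \emph{blocks} of rounds rather than single steps. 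The paper does exactly this: it applies Azuma to the martingale accumulated over $\kappa = 2C\log(n)/s$ consecutive rounds, using only the fact that $a_t\ge 1$ always, to conclude that $a_\kappa \ge C\log(n)/2$ except with probability $e^{-Cs\log(n)/64}$. You should replace your waiting-time argument with such a block-concentration (or a submartingale/biased-random-walk comparison) step; once $a_t$ reaches $\Theta(\log n)$, your per-round bounds take over and the rest of your proof goes through.
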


\begin{proof}[Proof sketch]
The key argument of the proof is that the gossip process very closely follows its mean dynamics. After a transition phase of a logarithmic number of rounds, a constant fraction of the nodes (depending on $s$) remains active despite the probability to stop emitting after each communication. This ``determinism of gossip process" has been introduced in \cite{sanghavi2007gossiping}, but their analysis only deals with the case $s=1$. Our proof takes into account the nontrivial impact of nodes deactivation in the exponential and linear growth phase. Besides, we need to introduce and analyze a last phase, showing that with high probability the population never drops below a critical threshold of active nodes. The full proof is in 
 Appendix~\ref{app:speed}.
\end{proof}

Theorem~\ref{thm:speed_s} shows that generic gossip with $s>0$ still achieves a logarithmic spreading time even though nodes can stop transmitting the message. The $1/s$ dependence is intuitive since $1/s$ rounds are needed in expectation to double the population of active nodes (without taking collisions into account). Therefore, the exponential growth phase which usually takes time $\mathcal{O}(\log n)$ now takes time $\mathcal{O}(\log(n) / s)$ for $s < 1$.
To summarize, we have shown that one can achieve both fast spreading and near-optimal privacy, leading to the values presented in Table~\ref{tab:privacy_speed_tradeoffs} of the introduction.

\section{Empirical Evaluation}
\label{sec:exp}

In this section, we evaluate the practical impact of $s$ on the spreading time as well as on the robustness to source location attacks run by adversaries with background knowledge.

\subsection{Spreading Time}

\begin{figure*}[t]
\centering
\begin{subfigure}{.5\textwidth}
  \centering
  \includegraphics[width=.9\linewidth]{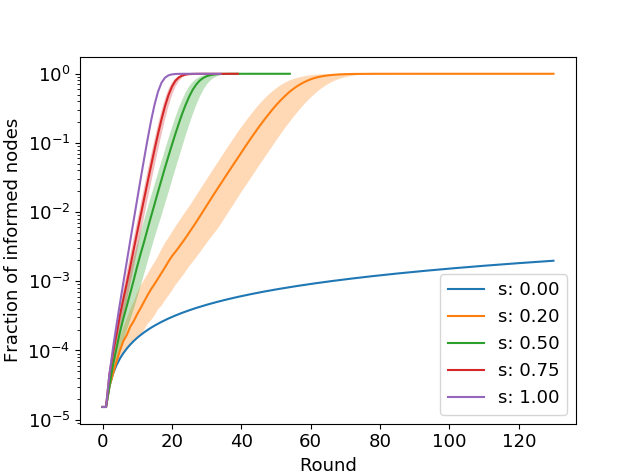}
\caption{Fraction of informed nodes}
\label{fig:fraction_informed}
\end{subfigure}%
\begin{subfigure}{.5\textwidth}
  \centering
  \includegraphics[width=.9\linewidth]{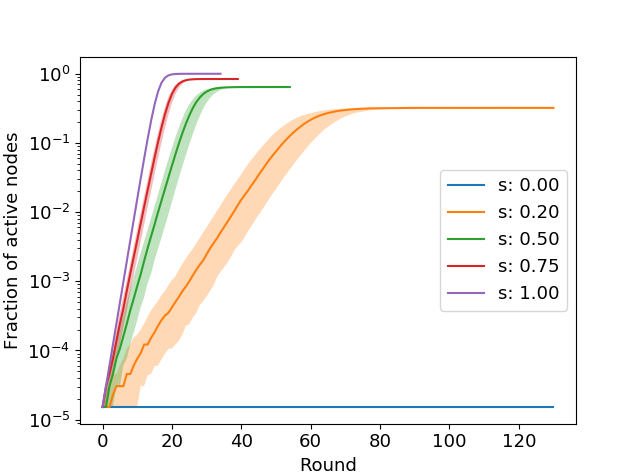}
\caption{Fraction of active nodes}
\label{fig:fraction_active}
\end{subfigure}%
\caption{Effect of parameter $s$ of Algorithm~\ref{algo:gen_sync_push} on the spreading time for a network of $n=2^{16}$ nodes. The curves represent median values and the shaded area represents the 10 and 90 percent confidence intervals over 100 runs. Each curve stops when all nodes are informed (and so the protocol terminates), except for $s=0$ since the protocol is very slow in this case.}
\label{fig:bounds}
\end{figure*}

To complement Theorem~\ref{thm:speed_s}, which proves logarithmic spreading time (asymptotic in $n$), we run simulations on a network of size $n=2^{16}$. The logarithmic spreading time for $s>0$ is clearly visible in Figure~\ref{fig:fraction_informed}, where we see that the gossip spreads almost as fast for $s=0.5$ that it does for $s=1$. We also observe that even when $s$ is small, the gossip remains much faster than for $s=0$. The results in Figure~\ref{fig:fraction_active} illustrate that the fraction of active nodes grows exponentially fast for all values of $s>0$ and then reaches a plateau when the probability of creating a new active node is compensated by the probability of informing an already active node. Empirically, this happens when the fraction of active nodes is of order $s$.

We note incidentally that gossip protocols are often praised for their robustness to lost messages~\cite{alistarh2010efficient,georgiou2013asynchronous}. While the protocol with $s=0$ does not tolerate a single lost message, setting $s>0$ improve the resilience thanks to the linear proportion of active nodes. The latter property makes it unlikely that the protocol stops because of lost messages as long as $s$ is larger than the probability of losing messages. Of course, the protocol remains somewhat sensitive to messages lost during the first few steps.

\subsection{Robustness Against Source Location Attacks}

\begin{figure*}[t]
\centering
\begin{subfigure}{.5\textwidth}
  \centering
  \includegraphics[width=.9\linewidth]{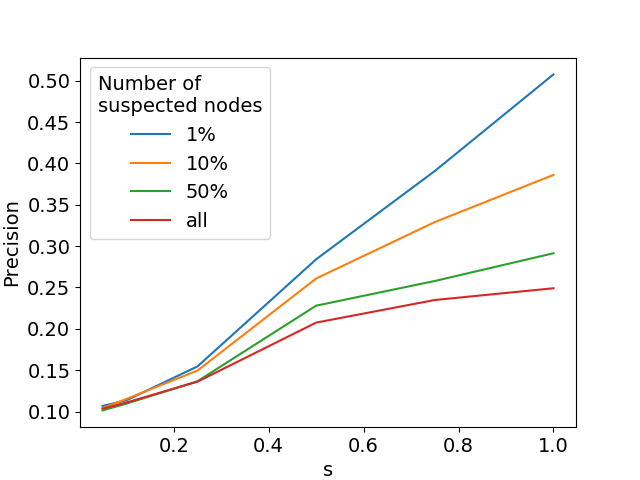}
\caption{ \centering Attack precision under prior information on the source}
\label{fig:prior_attack}
\end{subfigure}%
\begin{subfigure}{.5\textwidth}
  \centering
  \includegraphics[width=.9\linewidth]{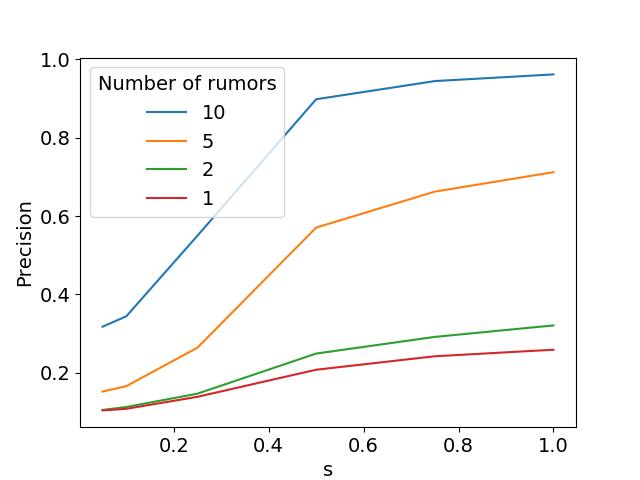}
\caption{ \centering Attack precision when the source spreads multiple rumors}
\label{fig:composition_attack}
\end{subfigure}%
\caption{Effect of parameter $s$ of Algorithm~\ref{algo:gen_sync_push} on the precision of source location attacks for a network of $n=2^{16}$ node with $10\%$ of curious nodes. Precision is estimated over 15,000 random runs.}
\label{fig:attacks}
\end{figure*}

Getting an intuitive understanding of the privacy guarantees provided by Theorem~\ref{thm:sync_push_s} is not straightforward, as often the case with differential privacy. Therefore, we illustrate the effect of the muting parameter on the guarantees of our gossip protocol by simulating concrete source location attacks. We consider two challenging scenarios where the adversary has some background knowledge: either 1) prior knowledge that the source belongs to a subset of the nodes, or 2) side information indicating that the same source disseminates multiple rumors.\vspace{5pt}

\noindent\textbf{Prior knowledge on the source.}
We first consider the case where the adversary is able to narrow down the set of suspected nodes. In this case we can design a provably optimal attack, as shown by the following theorem 
 (see Appendix~\ref{app:proof_mle} for the proof).

\begin{theorem}
\label{thm:mle}
If the adversary has a uniform prior over a subset $P$ of nodes, i.e., $p(I_0 = i) = p(I_0 = j)$ for all $i,j \in P$ and $p(I_0 = i) = 0$ for $i \notin P$, and for some output sequence $S$, $t_c$ is such that $S_{t_c} \in P$ and $S_t \notin P$ if $t < t_c$ , then $p(I_0 = S_{t_c} | S) \geq p(I_0 = i | S)$ for all $i$. 
\end{theorem}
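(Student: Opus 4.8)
The plan is to reduce the statement, via Bayes' rule, to an inequality between the source likelihoods $p_i(S)$, and then to establish that inequality by a probability-preserving injection between executions of Algorithm~\ref{algo:gen_sync_push}. By Bayes' rule, $p(I_0=i\mid S)=p(I_0=i)\,p_i(S)/p(S)$. For $i\notin P$ the prior vanishes, so $p(I_0=i\mid S)=0$ and the claimed inequality is immediate. For $i\in P$ the prior takes the same value $1/|P|$, so $p(I_0=i\mid S)$ equals $p_i(S)$ up to a factor that does not depend on $i\in P$. Writing $v:=S_{t_c}\in P$ (and assuming, harmlessly, that suspected nodes are not themselves curious), it therefore suffices to show $p_v(S)\ge p_i(S)$ for every $i\in P$.

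Fix $i\in P$ with $i\ne v$. I would encode an execution of Algorithm~\ref{algo:gen_sync_push} by its source together with, at each step, the drawn active node (uniform over the current set $A$), the muting bit (Bernoulli with parameter $1-s$) and the contacted node (uniform over $\{0,\dots,n-1\}$); the observed sequence is then the subsequence of (sender, recipient) pairs whose recipient lies in $\mathcal{C}$. The goal is to build an injection $\Phi$ from the set of source-$i$ executions producing $S$ into the set of source-$v$ executions producing $S$ that preserves the probability of each execution; summing the equalities $\Pr[\sigma]=\Pr[\Phi(\sigma)]$ over all source-$i$ executions producing $S$ then yields $p_i(S)\le p_v(S)$. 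Given a source-$i$ execution $\sigma$ producing $S$, consider the infection tree $T$ it induces (the parent of a node being the node that first transmitted the rumor to it), rooted at $i$, and let $i=u_0,u_1,\dots,u_\ell=v$ be the unique $T$-path from $i$ to $v$. Define $\Phi(\sigma)$ by re-rooting $T$ at $v$: reverse the edges along this path, so that $v=u_\ell\to u_{\ell-1}\to\dots\to u_0=i$, keep every other subtree and every step's data unchanged, and relabel the draw/recipient data along the path so that $\Phi(\sigma)$ is a legal execution with source $v$. Then $\Phi$ is injective (one recovers $\sigma$ by re-rooting back at $i$), it preserves probabilities because every elementary choice keeps the same uniform or Bernoulli law after relabelling, and it preserves the observed sequence: the only messages whose endpoints change are the first-contact messages along $u_0\to\dots\to u_\ell$, and these are all sent to non-curious nodes --- $v=u_\ell\in P$ is non-curious, and if some intermediate $u_k$ were curious, the first-contact message into it would be a disclosed message occurring before $v$'s disclosed message at position $t_c$, forcing $S_t\in P$ for some $t<t_c$ or $S_{t_c}\ne v$, contradicting the hypothesis.

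The main obstacle is to make the re-rooting precise at the level of executions rather than merely of infection trees and to verify the two invariances rigorously: reversing an infection path changes the steps at which the path nodes become active, hence a priori the law of ``which node is drawn from $A$'', and one has to check that, restricted to executions producing $S$, this is compensated by a bijective, probability-neutral relabelling of the step data (intuitively, reversing the path permutes but does not change the sizes of the relevant active sets). An equivalent route, should the explicit re-rooting prove awkward, is to use the node-relabelling symmetry $p_i(S)=p_v(\mathrm{swap}_{i\leftrightarrow v}(S))$ and then prove the monotonicity $p_v(\mathrm{swap}_{i\leftrightarrow v}(S))\le p_v(S)$ --- i.e.\ that with the source fixed at $v$, the observation $S$, in which $v$ first appears at the earliest possible slot $t_c$, is at least as likely as any relabelled observation in which $v$ first appears later or not at all --- which reduces to the same kind of coupling.
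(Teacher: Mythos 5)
Your Bayes reduction is fine and matches the paper's starting point: for $i\notin P$ the posterior vanishes, and for $i\in P$ the uniform prior cancels, so it suffices to show $p_v(S)\ge p_i(S)$ with $v=S_{t_c}$. The gap is in the central step. The re-rooting map, as you describe it (``keep every step's data unchanged, flip sender and recipient along the path $u_0\to\dots\to u_\ell$''), does not produce a legal execution. In the original execution the first-contact times along the path are increasing, $t_1<t_2<\dots<t_\ell$. After re-rooting, node $u_1$ is supposed to send the rumor to $u_0$ at time $t_1$, but its new parent is $u_2$, which only contacts it at time $t_2>t_1$; so $u_1$ would have to transmit before it is infected, and the same failure occurs at every intermediate node of the path. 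Repairing this forces you to also permute the time slots of the path edges, which changes when each $u_k$ is active, hence the sizes and compositions of the active sets from which the uniform sender is drawn, hence the probability of the execution and potentially the relative order of disclosures. You flag this yourself as ``the main obstacle'' and do not resolve it; your fallback (node swap plus a monotonicity of $p_v$ in the position of $v$'s first appearance) is an equivalent restatement of the claim, not a proof. So the key inequality is asserted rather than established, and the specific construction offered to establish it is broken.

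For comparison, the paper avoids reasoning about whole executions. It conditions on the set $A_{t_c}$ of active nodes at the time of the $t_c$-th disclosure, argues that the observed sequence is conditionally independent of the source given $A_{t_c}$ (the prefix because no node of $P$ is disclosed before $t_c$, the suffix by the Markov property of the process in its active set), and then reduces the theorem to the single inequality $p(A_{t_c}=A\mid I_0=i)\le p(A_{t_c}=A\mid I_0=j)$ for any $j\in A\cap P$. That inequality is proved by an exchangeability argument: the event fixing the activity status of all nodes other than $i$ and $j$ has the same probability under either source, and ``$j$ active, $i$ inactive'' is at least as likely when $j$ is the source, since it holds at time $0$. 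If you want to salvage your route, the realistic path is to prove your monotonicity claim by a coupling at the level of the pair $(i,v)$'s activity statuses, which is essentially the paper's lemma; the global tree re-rooting is the wrong tool here.
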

Theorem~\ref{thm:mle} means that under a uniform prior over nodes in $P$, the attack in which curious nodes predict the source to be the first node in $P$ that communicates with them corresponds to the Maximum A Posteriori (MAP) estimator. The set $P$ represents the prior knowledge of the adversary: he knows for sure that the source belongs to $P$.

Figure~\ref{fig:prior_attack} shows the precision of this attack as a function of $s$ for varying degrees of prior knowledge. We see that, when $s$ is small, the prior knowledge does not improve the attack precision significantly, and that the precision remains very close to the probability that the source sends its first message to a curious node. This robustness to prior knowledge is consistent with the properties of differential privacy (see Section~\ref{sec:related_priv}). On the contrary, when $s$ is high (i.e., differential privacy guarantees are weak), the impact of the prior knowledge on the precision of the attack is much stronger.\vspace{5pt}

\noindent\textbf{Multiple dissemination.}
We investigate another scenario in which differential privacy guarantees can also provide robustness, namely when the adversary knows that the same source node disseminates multiple rumors.
In this setting, analytically deriving an optimal attack is very difficult. Instead, we design an attack which leverages the fact that even though the source is not always the first node to communicate with curious nodes, with high probability it will be among the first to do so.
More precisely, the curious nodes record the 10 first nodes that communicate with them in each instance (results are not very sensitive to this  choice), and they predict the source to be the node that appears in the largest number of instances. In case of a tie, the curious nodes choose the node that first communicated with them, with ties broken at random. Figure~\ref{fig:composition_attack} shows that the precision of this attack increases dramatically with the number of rumors when $s$ is large, reaching almost sure detection for $10$ rumors. Remarkably, for small values of $s$, the attack precision increases much more gracefully with the number of rumors, as expected from the composition property of differential privacy discussed in Section~\ref{sec:related_priv}. Meaningful privacy guarantees can still be achieved as long as the source does not spread too many rumors.  

\section{Concluding Remarks}
\label{sec:conclusion}
This paper initiates the formal study of privacy in gossip protocols to determine to which extent the source of a gossip can be traceable. Essentially: (1) We propose a formal model of anonymity in gossip protocols based on an adaptation of differential privacy; (2) We establish tight bounds on the privacy of gossip protocols, highlighting their natural privacy guarantees; (3) We precisely capture the trade-off between privacy and speed, showing in particular that it is possible to design both fast and near-optimally private gossip protocols; (4) We experimentally evaluate the speed of our protocols as well as their robustness to source location attacks, validating the relevance of our formal differential privacy guarantees in scenarios where the adversary has some background knowledge.

Our work opens several interesting perspectives. In particular, it paves the way to the study of differential privacy for gossip protocols in \emph{general graphs}, which is  challenging and requires relaxations of differential privacy in order to obtain nontrivial guarantees. We refer to
 Appendix~\ref{app:general-graphs}
for a discussion of these questions.
Another avenue for future research is motivated by very recent work showing that hiding the source of a message can  amplify differential privacy guarantees for the \emph{content} of the message \cite{anonymity-dp,anonymity-dp2,blanket}.
Unfortunately, classic primitives to hide the source of messages such as mixnets can be difficult and costly to deploy. Showing that gossip protocols can \emph{naturally} amplify differential privacy for the message contents would make them very desirable for privacy-preserving distributed AI applications, such as those based on federated \cite{kairouz2019advances} and decentralized machine learning \cite{Bellet2018a}.

\bibliography{dp_gossip}

\newpage 

\appendix

\section{Model Extensions}
\label{app:remarks}

We kept our model relatively simple in the main paper to avoid unnecessary complexity in the notations and additional technicalities in the derivation and presentation of our results.
In this appendix, we briefly discuss various possible extensions. Basically, we make here the point that, although these generally lead to technical complications, they do not significantly impact privacy guarantees.

\subsection{Pull and Push-Pull Protocols}

Our study focus on the classic \emph{push} form of gossip protocols. This can be justified by the fact that, for regular graphs, synchronous push has asymptotic spreading time guarantees that are comparable with the push-pull variant \cite{giakkoupis2016asynchrony}. Besides, the differential privacy guarantees of any gossip protocol are limited by the probability that the first node informed by the source is a curious node, and we show this bound can be matched with push protocols.
Nevertheless, extensions of our results to pull and push-pull variants of gossip protocols \cite{karp2000randomized} are possible.
Forgetting mechanisms similar to the ones in Algorithm~\ref{algo:gen_sync_push} can be introduced for these protocols, i.e. nodes would have a probability $1 - s$ to stop disclosing information after each time they are pulled (if they do not pull someone with the information in between). Although slightly different, the optimal privacy guarantees would remain of the same order of magnitude. Yet, we expect pull guarantees to be even worse in the case $s=1$ because curious nodes could stop suspecting all nodes that they have pulled and that did not have the rumor. Besides, the pull protocol for $s=0$ would be even slower than its push counterpart.

\subsection{Eavesdropping Adversary}

Since we consider a complete graph, our formalization of the adversary as a fraction $f / n$ of curious nodes is closely related to an eavesdropping adversary who would observe each communication with probability $f / n$. Indeed, both models consider that each communication has a probability $f / n$ of being disclosed to the adversary. Most of our results are thus easily transferable to this alternative setting. The only difference would be that all nodes can be suspected in the eavesdropping model, thus introducing a $(1 - f/n)^{-1}$ factor each time we consider the population of non-curious nodes.

\subsection{Information Observed by the Adversary}
\label{app:information_observed}
We discuss three possible variants of the output observed by the adversary.

\subsubsection{Messages Sent by Curious Nodes} For simplicity of exposition, we considered that curious nodes only observe messages that are sent to them and not the messages that they send.
However, including the messages sent by curious nodes in their observed output would not impact the bounds on privacy (i.e., the guarantees for the algorithms). For the optimal algorithm, we only consider what happens during the first round, so including the messages sent by curious nodes does not change the result. This in particular implies that the fundamental limits of Theorem~\ref{thm:push_lb} remain the same (since the adversary observes strictly more information).
Similarly, for the parameterized algorithm, Theorem~\ref{thm:sync_push_s} is obtained by bounding the probability of a set $\hat{S}$. Then, we have $p(\hat{S}, S_{\rm out}) \leq p(\hat{S})$ where $S_{\rm out}$ is the sequence of messages sent by the curious nodes. In general, adding the messages sent by curious nodes to the output sequences has little or no impact on our results.

\subsubsection{Message Ordering} We assumed that the relative order of messages is preserved in the output sequence observed by curious nodes. This could be relaxed, as in real-world networks a message sent before another may well be received after it. One could for instance introduce a random swapping model to take this into account and investigate whether this weaker output leads to an improvement in the privacy guarantees. However, we argue that this improvement would be quite limited. 
First of all, it would not affect the privacy guarantees of the optimal protocol: since there is a single active node able to send a message at any given time, swapping is not possible. Therefore, the lower bound and the matching algorithm would not be affected by this change. Since parameterized gossip is almost privacy-optimal for small values of $s$ and swapping would only increase privacy, then we argue that the guarantees of parameterized gossip would be very similar in this case. Furthermore, even when several nodes are active at the same time (e.g., in Algorithm~\ref{algo:gen_sync_push} large $s$), the proofs can be adapted to work with counting the messages \emph{received} instead of the messages \emph{sent}. In this case, swapping is as likely to expose the source (making its messages arrive earlier) than to hide it (delaying the messages it sends).

\subsubsection{Global Timing} \label{app:strong_adversary}
In our model, we assume that curious nodes only have access to the relative ordering in which they received the messages but they have no information on the global time at which it was sent. This is justified in practice by the asynchrony and locality of the exchanges. We briefly discuss here how the privacy guarantees are affected if one considers a stronger adversary that has access to the number of times the \verb+tell_gossip+ procedure has been called.
Formally, this adversary observes the set $S = \left\{ (t, i, j) | (i,j) = (S_{\rm omni})_t, j \in \mathcal{C} \right\}$. This set can be turned into a sequence by ordering it by increasing values of $t$. Note that this is not a realistic adversary as gossip protocols naturally enforce partial observability of the events.

The following result quantifies the limits of privacy for this stronger adversary, which can be compared to the results of Theorem~\ref{thm:push_lb} in the main text. We can see that in the regime $\epsilon=0$ (total variation distance), the limits remain the same. However, achieving $\delta<f/n$ and prediction uncertainty is not possible against this stronger adversary. Note also that Algorithm 1 with $s=0$ remains optimal.

\begin{theorem}
If a gossip protocol satisfies $(\epsilon, \delta)$-differential privacy and $c$-prediction uncertainty then we have $\delta \geq \frac{f}{n}$ and $c=0$ in the strong adversary setting. Furthermore, these bounds are tight and matched by
Algorithm 1 when its parameter is set to $s=0$.
\end{theorem}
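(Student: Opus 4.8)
The plan is to mirror the structure of the proof of Theorem~\ref{thm:push_lb}: first establish the two lower bounds, which hold for \emph{every} gossip protocol against the strong adversary, and then exhibit a matching upper bound for Algorithm~\ref{algo:gen_sync_push} with $s=0$.

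For the lower bounds, the single fact I would exploit is that in any gossip protocol (Definition~\ref{def:push_algo}) the very first call to \verb+tell_gossip+ is made by the source, since it is the only informed node, and it selects its recipient uniformly among $\{0,\dots,n-1\}$; moreover the strong adversary observes the global index $t$ of every disclosed message. I would then consider the output set $S^\star$ of observation sequences that contain a triple of the form $(0,0,j)$ with $j\in\mathcal{C}$, i.e. the adversary sees that the message sent at global time $0$ went from node $0$ to a curious node. By symmetry (Section~\ref{subsec:private_gossip}) it is enough to compare sources $0$ and $1$. Since the time-$0$ message is always emitted by the source, $p_0(S^\star)$ equals the probability $f/n$ that the first recipient is curious, while $p_1(S^\star)=0$ because when node $1$ is the source no disclosed triple can have first coordinates $(0,0)$. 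Plugging into Equation~\eqref{eq:dp} gives $f/n=p_0(S^\star)\le e^\epsilon p_1(S^\star)+\delta=\delta$, so $\delta\ge f/n$ for every finite $\epsilon$. For prediction uncertainty, under a uniform prior the same observation $S^\star$ forces $p(I_0=\{0\}\mid S^\star)=1$, hence the ratio in Equation~\eqref{eq:source_un} equals $0$; since $p_0(S^\star)=f/n>0$, no $c>0$ can be achieved.

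For the matching upper bound I would show that Algorithm~\ref{algo:gen_sync_push} with $s=0$ is $(0,f/n)$-DP against the strong adversary, and note that it trivially satisfies $c=0$. Let $T_0$ denote the event that the source's first recipient is curious, so $p_i(T_0)=f/n$ for every source $i$. The crux is a conditional-independence claim: conditioned on $\overline{T_0}$, the entire \emph{timestamped} trace seen by the strong adversary has a law that does not depend on the source. I would justify this by noting that for $s=0$ there is exactly one active node at every step, so the global index $t$ is just the step count, and after step $0$ the active node performs a Markov chain (pick a uniform recipient, transfer activity) whose dynamics never reference the source; conditioned on $\overline{T_0}$, the active node at step $1$ is uniform over the $n-f$ non-curious nodes regardless of the source, and from then on the disclosed triples are a deterministic functional of this source-independent chain. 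Given this, for any output set $S$ I would write $p_0(S)=p_0(S,T_0)+p_0(S,\overline{T_0})\le f/n + p_1(S,\overline{T_0})\le f/n + p_1(S)$, which is $(0,f/n)$-DP and hence $(\epsilon,f/n)$-DP for all $\epsilon\ge 0$, matching the lower bound.

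The step I expect to be the main obstacle is the conditional-independence claim used for the matching bound: one has to argue carefully that once the source has passed the rumor to a non-curious node, the extra information available to the strong adversary, namely the global time index of each disclosed communication, still contains nothing about who started the gossip. The rest of the argument is short and parallels the computations in the proof of Theorem~\ref{thm:push_lb}.
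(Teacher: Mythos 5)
Your proposal is correct and follows essentially the same route as the paper: the lower bounds come from the output set in which the time-$0$ message goes from node $0$ to a curious node (probability $f/n$ under source $0$, probability $0$ otherwise), and the matching bound for $s=0$ comes from observing that, once the source's first message misses the curious nodes, the single-active-node chain (and hence the timestamped disclosed trace) has a source-independent law. Your explicit conditioning on $T_0$ is just a more detailed write-up of the paper's terser claim that only outputs where node $0$ or $1$ discloses on its first communication can have differing probabilities.
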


\begin{proof}
The fact that \verb+tell_gossip+ is called at least once and is first called on node $0$ still holds. Sequence $S^{(0)}$ now denotes the fact that node $0$ communicates with a curious node at time 0. Since the protocol is run on the complete graph, the node selected by \verb+tell_gossip+ is chosen uniformly within $\{0, ..., n - 1\}$, so a curious node is selected with probability $\frac{f}{n}$. We thus have $p_0(S^{(0)}) = \frac{f}{n}$. Besides, node $0$ cannot communicate with a curious node at time $0$ if node $1$ starts the rumor so $p_1(S^{(0)}) = 0$. For prediction uncertainty, using the same sequence $S^{(0)}$ yields $\frac{p_i(S^{(0)})}{p_0(S^{(0)})} = 0$ for all $i\neq 0$ and therefore $c=0$.

It remains to show that these bounds are matched by Algorithm~\ref{algo:gen_sync_push} with $s=0$. The fact that the only outputs that have a different probability if node $0$ starts (compared to the case when $1$ starts) are those in which $0$ (or $1$) communicates with a curious node for its first communication is still true with the stronger adversary. Then, we write $p_0(S_0=0)=p_1(S_0=1)=\frac{f}{n}$ and $p_0(S_0=1)=p_1(S_0=0)=0$. This ensures that $p_0(S^{(0)}) \leq p_1(S^{(0)}) + \frac{f}{n}$ (similarly for $S^{(1)}$), and the result follows.
\end{proof}

\subsection{Malicious Behavior}
\label{app:malicious}
We also assumed for simplicity that nodes are \emph{curious} but not \emph{malicious}, i.e., they follow the protocol. This is motivated by a practical scenario where a subset of nodes are simply being monitored by a curious entity.
If curious nodes can also act maliciously, they have three possible ways to affect the protocol (abstracting away the content of the information): emitting more, emitting less, or not choosing neighbors uniformly at random. If they emit more, they will inform more nodes, which makes it more difficult for them to locate the source. If they emit less (potentially not at all), then in the case $s < 1$, the protocol could stop before all nodes are informed. Yet, the privacy bounds are derived from the fact that the source forgets the information before communicating to a curious node. Choosing the neighbors they send the messages to reduces to the case in which they emit less (for they do not send messages to uninformed nodes) but without affecting protocol speed or termination (this does not reduce the number of active nodes). Thus, the impact on the observed output and therefore on the privacy would be minimal.
In the case $s=1$, malicious nodes have slightly more impact but remain quite small: this case only makes the set of informed nodes grow slightly slower. 

\subsection{Termination Criterion}

For simplicity, in all our gossip protocols we used a global termination criterion (the protocol terminates when all nodes are informed).
Termination without using global coordination is a problem in its own right that has been extensively studied (see for instance \cite{karp2000randomized}). Although some termination criteria could have a great impact on privacy, we argue that termination can be handled late in the execution so as to reveal very little about the beginning, hence avoiding any significant impact on privacy. For instance, it is possible to design a variant of Algorithm~\ref{algo:gen_sync_push} in which nodes only flip a coin with probability $s$ for a fixed number of times, and then stop emitting completely. This fixed number would have to depend on $s$, but then if it is large enough, it would guarantee both termination and privacy. Indeed, nodes would not communicate with curious nodes each time they are activated with high probability so this counter would actually provide very little information to the curious nodes. Determining how large this number of iterations should be, and the exact impact on privacy (which we argue is very small), is beyond the scope of this paper. 

\section{Delayed Start Gossip}
\label{app:weak_privacy_protocol}
Consider the protocol described in Remark~\ref{rem:naive}, which we call \emph{delayed start gossip}:
\begin{enumerate}
\item[1.] The source calls \verb+tell_gossip+ once to forward to an arbitrary node, say node $j$.
\item[2.] Node $j$ then starts a standard push protocol (Algorithm~\ref{algo:gen_sync_push} with $s=1$).
\end{enumerate}

This simple protocol is clearly optimal from the point of view of differential privacy in the regime $\epsilon=0$ (total variation distance). 
Indeed, if the first communication does not hit a curious node then the probability of a given output when two different nodes start the gossip is the same. It is also fast since it runs the standard gossip after the first round. 

Yet, this naive protocol has a major flaw. Indeed, when the first communication hits a curious node, the adversary can monitor whether the sender communicates with curious nodes again in the next rounds. If it does not, they can guess that the node is the source, and they will in fact make a correct guess with probability arbitrarily close to 1 for large enough graphs. On the other hand, when the sender communicates again with a curious node shortly after, they can be very confident that this node is not the source. Hence, it is possible to design a very simple attack with a very high precision (almost always right) and almost optimal recall (identifying the source with certainty every time the information is actually released, i.e. with probability $\frac{f}{n}$).

Making sure that the adversary is uncertain about its prediction is therefore a desirable property. This is captured by our notion of \emph{prediction uncertainty}. The following proposition formalizes the above claims.

\begin{proposition}
\label{prop:delayed}
We call $c_{ds}$ the prediction uncertainty constant of the delayed start protocol and we assume the ratio of curious nodes $f / n$ to be constant. Then $c_{ds} \rightarrow 0$ when $n \rightarrow \infty$.
\end{proposition}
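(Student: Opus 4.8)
The plan is to pin down a single observable pattern that, once seen, essentially reveals the source, and to show its posterior converges to a point mass. By symmetry I take the source to be node $0$ with $0\notin\mathcal C$, and I use that under a uniform prior $c_{ds}\le p(I_0\neq 0\mid S)/p(I_0= 0\mid S)=\sum_{k\neq 0}p_k(S)/p_0(S)$ for any admissible $S$; it therefore suffices to construct a set of output sequences $S$ with $p_0(S)$ bounded away from $0$ and $\sum_{k\neq 0}p_k(S)\to 0$ as $n\to\infty$ (recall that $f/n$ is a fixed constant). I would take $S$ to be the set of output sequences in which node $0$ is the sender of the \emph{first} disclosed tuple and of \emph{no other} disclosed tuple — exactly the pattern the naive attack of Remark~\ref{rem:naive} reacts to. The structural fact driving everything is that in delayed start the source emits exactly once (step~1) and is silent ever after, whereas every other node, once informed, runs standard push and emits at every round until termination.

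First I would compute $p_0(S)$ exactly: if $0$ is the source, its unique emission lands on a curious node with probability $f/n$, in which case the first and only disclosed tuple has sender $0$ and $\sigma\in S$; otherwise $0$ never appears in the output. Hence $p_0(S)=f/n$, a positive constant. Next, for a fixed source $k\neq 0$, $\sigma\in S$ forces two events: $(E_1)$ node $0$ is the first node ever to contact a curious node, and $(E_2)$ after that first contact node $0$ never contacts a curious node again; so $p_k(S)\le p_k(E_1)\,p_k(E_2\mid E_1)$. For $(E_1)$ I would use that the $n-f-1$ non-source non-curious nodes run the same protocol on a complete graph and are therefore exchangeable, so the first disclosed tuple has sender $0$ with probability at most $1/(n-f-1)$. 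For $(E_2\mid E_1)$ I would combine (a) the deterministic bound that the informed set at most doubles each round, so the execution lasts at least $\log_2 n$ rounds, with (b) the standard exponential-growth concentration for push \cite{pittel1987spreading,frieze1985shortest}: conditioning on $(E_1)$ forces all emissions preceding node $0$'s first contact to miss $\mathcal C$ — each independently with probability $1-f/n$ — which, since the number of such emissions grows geometrically, places node $0$'s first contact within $O(\log\log n)$ rounds except on an event of probability $n^{-\omega(1)}$. Consequently node $0$ is active and emits at each of at least $\tfrac12\log_2 n$ further rounds, and $(E_2)$ demands every one of those emissions miss $\mathcal C$, so $p_k(E_2\mid E_1)\le (1-f/n)^{\frac12\log_2 n}+n^{-\omega(1)}=n^{-\gamma}+n^{-\omega(1)}$ with $\gamma=-\tfrac12\log_2(1-f/n)>0$.

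Combining these bounds, $p_k(S)\le \frac{1}{n-f-1}\big(n^{-\gamma}+n^{-\omega(1)}\big)$ uniformly in $k\neq 0$, hence $\sum_{k\neq 0}p_k(S)\le \frac{n-1}{n-f-1}\big(n^{-\gamma}+n^{-\omega(1)}\big)=O(n^{-\gamma})$ since $(n-1)/(n-f-1)\to(1-f/n)^{-1}$; dividing by $p_0(S)=f/n$ then gives $c_{ds}\le O(n^{-\gamma})\to 0$.

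The hard part will be the quantitative version of $(E_2\mid E_1)$: one must argue that "node $0$ is the first disclosed sender'' genuinely forces node $0$ to have been informed early enough that $\Omega(\log n)$ later emissions by $0$ are guaranteed, which needs a clean concentration statement for the size of the informed (and emitting) set in the exponential phase of push, together with the observation that, conditioned on the infected set at each round, whether a given emission hits $\mathcal C$ is an independent $\mathrm{Bernoulli}(f/n)$. A purely deterministic bound such as $N_t\ge t$ would not suffice, since it only yields a constant lower bound on $\sum_{k\neq 0}p_k(S)$ rather than $o(1)$; it is the geometric blow-up of the number of potential disclosures that makes the argument go through.
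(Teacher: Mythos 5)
Your overall strategy---exhibit a witness set of outputs $S$ with $p_0(S)$ bounded below and $\sum_{k\neq 0}p_k(S)\to 0$, then bound the posterior ratio under a uniform prior---is the same as the paper's, but your choice of witness event rests on a false structural premise, and the computation genuinely breaks down. In delayed start gossip the source is \emph{not} ``silent ever after'': as in Algorithm~\ref{algo:gen_sync_push}, a node rejoins the active set every time it receives the rumor (this is the point of the footnote on fanout, and the paper's own proof explicitly introduces $t_0$ as ``the time at which node $0$ gets the message and becomes active again''). Under the embedded standard push the source is re-informed around round $\log_2 n$ and then emits in each of the remaining $\Theta(\log n)$ rounds, each emission hitting $\mathcal{C}$ with probability $f/n$. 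Hence for your event $S$ (``node $0$ discloses first and never again'') one has $p_0(S)=\frac{f}{n}\,(1-f/n)^{\Theta(\log n)}$, which decays polynomially in $n$---not $p_0(S)=f/n$. Both numerator and denominator of your ratio then decay polynomially, and a back-of-envelope comparison of exponents (node $0$ must miss $\mathcal{C}$ roughly $\ln n$ times after re-activation when it is the source, versus roughly $\log_2 n+\ln n$ times when it is an early-informed non-source, while $p_k(E_1)\le 1/(n-f-1)$ trades against $p_0(t_d(0)=0)=f/n$) gives a ratio of order $n^{1+\log_2(1-f/n)}$, which \emph{diverges} whenever $f/n<1/2$. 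So in the typical regime your argument yields only a vacuous upper bound on $c_{ds}$; the conclusion $c_{ds}=O(n^{-\gamma})$ does not follow.

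The repair is exactly the truncation used in the paper: replace ``never again'' by ``not among the next $r$ disclosed tuples'' for a window $1\ll r\ll n$ (the paper takes $r=\log^2 n$). Since $f/n$ is constant, $r$ disclosures correspond to $\Theta(r)$ total communications, each re-informing node $0$ with probability $1/n$, so under $p_0$ the source is with high probability still inactive throughout the window and $p_0(S_r^{(0)})\to f/n$ as you wanted; under $p_k$ with $k\neq 0$, node $0$ conditioned on being the first discloser is active for the entire window and discloses again within it with probability $1-o(1)$, giving $p_k(S_r^{(0)})=o(1/n)$ uniformly in $k$. Two smaller points: since $f/n$ is a fixed constant, the first disclosure occurs within $O(1)$ rounds (not $O(\log\log n)$), as the per-round number of emissions grows geometrically; and your exchangeability bound for $E_1$ should be stated conditionally on the first discloser not being the source $k$ itself (which only helps, but should be said).
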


More generally, it is in principle possible to prove similar results for any protocol in which the source node does not behave like other nodes. Indeed, if the special behaviour can be detected, then the adversary can know for sure the source of the rumor. This motivates the need for more involved protocols such as those covered by Algorithm~\ref{algo:gen_sync_push}.

\begin{proof}[Proof of Proposition~\ref{prop:delayed}]
The proof reuses some elements of the proof of Theorem~\ref{cor:impossibility_sync}. We consider the sequence $S_r^{(0)}$ such that node $0$ is the first node to communicate with a curious node ($S_0 = 0$) and then $r$ other nodes communicate with curious nodes before $0$ does ($S_i \neq 0$ for $i \in \{1, ..., r\}$). We denote by $t_0$ the time at which node $0$ gets the message and becomes active again (we refer here to the global order, although of course the curious nodes do not have access to it). Then, with the usual notations we have:
\begin{align*}
p_0\left(S_r^{(0)}\right) &= p_0(S_0=0)p_0\left(S_r^{(0)} | S_0 = 0\right)\\
&\geq \frac{f}{n} p_0\left(\cap_{i=1}^r S_i \neq 0 | S_0 = 0\right)\\
&\geq \frac{f}{n} p_0(t_0 \geq r)\\
&\geq \frac{f}{n} p_0 (n_c(r) \leq k^*) p_0(t_0 \geq r | n_c(r) \leq k^* ).
\end{align*}

Then, we recall from the proof of Theorem~\ref{cor:impossibility_sync} that
\begin{align*}
p_0 (n_c(r) \leq k) &= p\left(Binom(k, \frac{f}{n}) \geq r\right)\\
&= p\left(Binom(k, 1 - \frac{f}{n}) < k - r\right)\\
&= 1 - p\left(Binom(k, 1 - \frac{f}{n}) \geq k - r\right),
\end{align*} 
so if we set $k = \frac{2n}{f}r$ and use tail bounds on the binomial law (Theorem 1 of \cite{arratia1989tutorial}) then there exists a constant $H$ (only depending on $f / n$) such that $p_0 (n_c(r) \leq r \frac{2n}{f}) \geq 1 - e^{- r H}$. Therefore, we have:
\begin{equation}
\label{eq:c1rn}
p_0\left(S_r^{(0)}\right) \geq \frac{f}{n} \left(1 - e^{-r H} \right)  \left(1 - \frac{1}{n}\right)^{r \frac{2f}{n}} \geq C_1(r, n).
\end{equation}
The last line comes from calculations done in the proof of Theorem~\ref{cor:impossibility_sync}. We now study $p_1(S_r^{(0)})$. Since node $1$ started the protocol then it means that no other node (and in particular $0$) will stop emitting the message. Therefore, if node $0$ is the first to communicate with a curious node then it will remain active for the whole duration of the protocol. Consider that the first disclosure happens after $T_f$ communications. We can write:
\begin{align*}
p_1\left(S_r^{(0)}\right) \leq p_1(S_0=0) p_1\left(\cap_{i=1}^r S_i \neq 0 | S_0 = 0, T_f \leq t_f\right) + p_1(T_f > t_f).
\end{align*}

Since the fraction of curious nodes is constant, we can choose $t_f$ independently of $n$ or $r$ such that $p(T_f > t_f) \leq e^{-\frac{f}{n} t_f} \leq \frac{\epsilon}{4(n-f)}$ if $t_f = \frac{n}{f} \log\left(\frac{4(n-f)}{\epsilon}\right)$ in order to control the second term. Then, 
\begin{align*}
p_1\left(\cap_{i=1}^r S_i \neq 0 | S_0 = 0, T_f \leq t_f\right) \leq \prod_{t=t_f}^{t_f + r} \left(1 - \frac{f}{n} \frac{1}{t}\right)\leq e^{-\frac{f}{n} \sum_{t=t_f}^{t_f + r} \frac{1}{t}}.
\end{align*}

A series-integral comparison yields that if $r=\log^2(n)$ then $\exp\left(-\frac{f}{n} \sum_{t=t_f}^{t_f + r} \frac{1}{t}\right) \leq \frac{\epsilon}{4}$ for $n$ large enough. Finally, we use the fact that $p_1(S_0 = 0) \leq \frac{1}{n - f}$ to write that 
$p_1\left(S_r^{(0)}\right) \leq \frac{\epsilon}{2 (n - f)}$.

Finally, we observe that $C_1(\log^2 n, n) \rightarrow \frac{f}{n}$ when $n \rightarrow \infty$ where $C_1$ is defined in Equation~\ref{eq:c1rn}. In particular, $C_1(\log^2 n, n) \geq \frac{f}{2n}$ for $n$ large enough, so we have
\begin{equation}
\frac{p(I_0 \neq {0} | S_r^{(0)})}{p(I_0 = {0} | S_r^{(0)})} 
 = \sum_{i \notin \mathcal{C}\cup \{0\}} \frac{p_i( S_r^{(0)})}{p_0( S_r^{(0)})} \leq \frac{n}{f} \epsilon.
\end{equation}

Since $\epsilon$ can be picked arbitrarily small and $\frac{n}{f}$ is assumed to be constant then the previous ratio can be made arbitrary small.
\end{proof}

\section{Detailed Proofs}

\subsection{Privacy Guarantees}
\label{app:proofs_faster_push}

\begin{proof}[Proof of Theorem~\ref{cor:impossibility_sync}]
Intuitively, the proof relies on the fact that the event $\{t_d(0) \leq t_i(1)\}$ (node $0$ communicates with a curious node before node $1$ gets the message) becomes more and more likely as $n$ grows, hence preventing any meaningful differential privacy guarantee when $n$ is large enough. To formalize this, we study $S_r^{(0)} = \{S, S_t = 0 \text{ for some } t \leq r\}$, the set of output sequences such that the rank of node $0$ in the sequence is less than $r$. For a specific sequence to not be in $S_r^{(0)}$, there must have been at least $r$ communications (because $r$ nodes must have communicated with curious nodes), and none of them involved $0$ and a curious node. Therefore, if we note $n_c(r)$ the number of communications that actually happened before the output sequence reached size $r$, we have $n_c(r) \geq r$. Then, denoting by $C(t)$ the node that communicated with a curious node at time $t$ (with $C(t) = -1$ when the communication did not involve a curious node):
\begin{align*}
p_0( S_r^{(0)}) &= 1 - p\left(\cap_{t=0}^{n_c(r)} C(t) \neq 0\right) = 1 - \textstyle\prod_{t=0}^{n_c(r)} p\left(C(t) \neq 0\right)
\geq 1 - \prod_{t=0}^{r} \left( 1 - \frac{f}{n}\frac{1}{t + 1} \right),
\end{align*}
 where the last step comes from the fact that the probability of node $0$ to be selected at time $t$ is $\frac{1}{|I_t|} \geq \frac{1}{t}$ because at most one node is informed at each step and the active node is selected uniformly among informed nodes. We use the fact that $\log(1 + x) \leq x$ for any $x > -1$ on $x = - \frac{f}{n}\frac{1}{t + 1}$ to get:
\begin{equation}
\textstyle\prod_{t=0}^{r} \Big( 1 - \frac{f}{n}\frac{1}{t + 1} \Big) = \exp\Big(\sum_{t=0}^{r} \log\Big( 1 - \frac{f}{n}\frac{1}{t + 1}\Big)\Big) \leq \exp\Big(- \frac{f}{n}\sum_{t=0}^r \frac{1}{t + 1}\Big).
\end{equation}
Therefore, $p_0( S_r^{(0)})$ goes to $1$ as $r$ goes to infinity. We emphasize that we do not need to fix any network size for this result to hold since the ratio $f / n$ is assumed to be constant.

Then, for a given $r$ and for any $k >0$, $p(n_c(r) \leq k)$ is equal to  $p(Binom(k, \frac{f}{n}) \geq r)$ where $Binom(k, \frac{f}{n})$ is the binomial law of parameters $k$ and $\frac{f}{n}$. This is because it is the probability of having exactly $r$ successes with the sum of less than $k$ Bernoullis of parameter $\frac{f}{n}$, which is equal to the probability of having more than $r$ successes with the sum of $k$ Bernoullis of the same parameters. Therefore, $p(n_c(r) \leq k)$ is independent of $n$ and we can choose $k^*$ independently of $n$ such that $p(n_c(r) > k^*) \leq \frac{1}{n}$.
Then, we write that 
\begin{align*}
p_1( S_r^{(0)}) &= p_1( S_r^{(0)}, n_c(r)\leq k^*) + p_1( S_r^{(0)}, n_c(r) > k^*) \leq p_1( S_r^{(0)} | n_c(r) \leq k^*) + 1/n.
\end{align*}
This implies $p_1( S_r^{(0)} | n_c(r) \leq k^*) \leq p_1( 0 \in I_r | n_c(r) \leq k^*) \leq 1 - p_1(0 \notin I_r | n_c(r) \leq k^*)$. We know that only $r$ communications have reached curious nodes but the others have reached a random node in the graph, and there is at most $k^*$ of them, so finally:
\begin{equation*}
p_1( S_r^{(0)}) \leq 1 - \Big(1 - \frac{1}{n}\Big)^{k^*} + \frac{1}{n}.
\end{equation*}
We immediately see that $p_1( S_r^{(0)})$ goes to $0$ as $n$ grows because $k^*$ is independent of $n$, and we have shown above that $p_0( S_r^{(0)})$ goes to $1$ as $n$ grows. Since we must have that $p_0( S_r^{(0)}) \leq e^\epsilon p_1( S_r^{(0)}) + \delta$, we must have $\delta = 1$ if we want $\delta$ and $\epsilon$ to be independent of $n$.
\end{proof}

\begin{proof}[Proof of Theorem~\ref{thm:source_uncertainty_private_push}]
For any set of sequences $S \subset \mathcal{S}$ such that $p_0(S) > 0$:
\begin{align*}
\frac{p(I_0\neq 0 |S )}{p(I_0 = 0 | S)} = \sum_{i \notin \mathcal{C}\cup \{0\}} \frac{p_i( S )}{p_0(S)}\geq \sum_{i \notin \mathcal{C}\cup \{0\}} \frac{p_i(A_1 = \{ 0 \} ) p_i(S | A_1 = \{ 0 \})}{p_0(S)},
\end{align*}
where $A_1$ is the set of active nodes at round $1$. Because the state of the system (active nodes) is the same in both cases we can write that $p_i(S | A_1 = \{ 0 \}) = p_0(S)$. Besides, $p_i(A_1 = \{ 0 \} )$ corresponds to the probability that node $i$ sends a message to node $0$ and then stops emitting. Therefore:
$\frac{p( I_0 \neq 0 |S )}{p(I_0 = 0 | S)} \geq \left(1 - \frac{f + 1}{n}\right) (1 - s) > 0.$
\end{proof}

\subsection{Spreading time}
\label{app:speed}

\begin{proof}[Proof of Theorem~\ref{thm:speed_s}]

The idea of this proof is to rely on the ``determinism'' of gossip process, similarly to \cite{sanghavi2007gossiping}. This means that the gossip process very closely follows its mean dynamics. In our case, there is an added difficulty in the fact that extra randomness is introduced by the deactivation of the nodes. Yet, we precisely quantify the impact of this phenomenon on the results. We start by showing that if more than $k(s)$ nodes are informed at a given time, then with very high probability the number of informed nodes never drops below this fraction once it is reached. Therefore, a number of messages proportional to the size of the graph is sent at each round. The condition on $s$ for this to happen is written in Equation~\eqref{eq:c2_c1}. More formally, we fix $s \in (0, 1]$ and denote by $A_t$ the number of nodes that are active at round $t$, which is such that $A_t = \alpha_t n$. Then, we note 
\begin{equation}
\label{eq:f_dynamics}
f: \alpha \rightarrow 1 - p_{u}(\alpha) (1 - \alpha s),
\end{equation}
where $p_{u} (\alpha) = (1 - \frac{1}{n})^{\alpha n}$. Note that $f(\alpha) = \frac{1}{n} \mathbb{E}[A_{t+1} | A_t = \alpha n]$. To see this, we count the number of active nodes at time $t+1$. In total,  $A_t = \alpha n$ messages are sent at the beginning of the round. Therefore, for each node, the probability of having received a message at the end of the round is exactly $1 - p_u(\alpha)$ since each message has a $1/n$ probability to be sent to this specific node. In the end, $n \left(1 - p_u(\alpha)\right)$ nodes get the message in expectation. The rest of the active nodes at time $t+1$ is made of the nodes that were active, did not receive the message and did not deactivate, which represents a portion $n \alpha p_u(\alpha) s$ of the nodes. Then, one can see that the function $f$ is simply the sum of these 2 terms. We show by using that $\left(1 - x \right)^{y} \leq e^{-xy} \leq 1 - xy + \frac{x^2 y^2}{2}$ that for $\alpha \leq \alpha_s = \frac{s}{1 + 2s}$, we have: 
\begin{equation}
\label{eq:f_alpha_alpha}
f(\alpha) \geq \left(1 + \frac{s}{2}\right) \alpha.
\end{equation}
Then, we follow the same steps as in Lemma 15 in \cite{sanghavi2007gossiping}. We call $A_t$ the number of active nodes at round $t$, and $A_{t,m}$ the number of active nodes at round $t$ after $m$ messages have been sent (so during the round). Then, we can define $X_i = A_{t,i+1} - A_{t, i}$. $A_{t,i+1}$ only depends on $A_{t,i}$ and so does $X_i$:
\[
X_i = \bigg \{
  \begin{tabular}{lll}
  $1$  & with proba & $s ( 1 - \frac{|A_{t,i}|}{n})$ \\
  $- 1$ & with proba & $(1 - s) \frac{|A_{t,i}| - 1}{n}$\\
  $0$ & otherwise &
  \end{tabular}
\]
Then, we define the martingale $Z_i = \mathbb{E}[\sum_{i=1}^{A_t} X_i | X_1, \cdots, X_i, A_t]$. This allows us to write $A_{t+1} - n f(\alpha) = Z_0 - Z_{A_t}$. 
If we call $S_{k,t} = \sum_{i=k}^{A_t} X_i$ then for any $d \in \{ -1, 0, 1\}$:
\begin{align*}
\mathbb{E}&[S_{1,t} | X_1, , X_i, X_{i+1} = 1, A_t]\\&\geq \mathbb{E}[S_{1, t} | X_1, \cdots, X_i, X_{i+1} = d, A_t]\\
&\geq \mathbb{E}[S_{1, t} | X_1, \cdots, X_i, X_{i+1} = -1, A_t],
\end{align*}
 because the distribution of $X_i$ only depends on $A_{t, i}$. Therefore, $| Z_{i+1} - Z_i | \leq (1 +  \mathbb{E}[S_{i+1, t} | A_t + 1]) - (\mathbb{E}[S_{i+1, t} | A_t - 1] - 1)] \leq 2$. Azuma's inequality~\cite{mitzenmacher2005probability} then gives:
\begin{equation}
\label{eq:azuma_hoeff}
p\left(A_{t+1} - n f(\frac{A_t}{n}) \leq - \lambda A_t | A_t = k\right) \leq e^{- \frac{(\lambda k)^2}{8k}}.
\end{equation}
We also have that $p(A_{t+1} < k | A_t \geq k) \leq p(A_{t+1} \leq k | A_t = k)$. Then, for any $\alpha \leq \alpha_s$, Equation~\ref{eq:f_alpha_alpha} yields that for all $\lambda$: 
\begin{equation}\label{eq:azuma_easy}
    p\left(A_{t+1} \leq A_t\left(1 + \frac{s}{2} - \lambda\right) | A_t\right) \leq e^{-\frac{\lambda^2}{8}A_t}.
\end{equation}
We can then bound this expression by using Equation~\ref{eq:azuma_hoeff} with $\lambda = \frac{s}{2}$, leading to $$p(A_{t+1} < k | A_t \geq k) \leq e^{-\frac{s^2}{32}k} \hbox{ if } \alpha \leq \alpha_s.$$

Denoting by $N_{k,j}$ the number of messages sent between rounds $k$ and $j$, we can decompose over  $C \alpha^{-1} \log n$ rounds so that if $m$ is such that there are at least $\alpha$ active nodes at round $m$ then:
$$p(N_{m, m + C \alpha^{-1} \log n } \geq C n \log n ) \geq (1 - e^{- \frac{s^2 \alpha n}{32}})^{C \alpha^{-1} \log n },$$
because it is equal to the probability that the fraction of active nodes never goes below $\alpha$ for $C \alpha^{-1} \log n$ rounds. Therefore, if
\begin{equation}
\label{eq:c2_c1}
s^2 \geq \frac{32}{\alpha n} \log \frac{C \log n}{\alpha\log(1 - \delta)}, \hbox{ then } p(N_{m, m + C \alpha^{-1} \log n} \geq C n \log n ) \geq 1 - \delta.
\end{equation}

Equation~\ref{eq:c2_c1} gives a lower bound on the value of $\alpha$. Note that for a fixed $\alpha$, this lower bound goes to $0$ as $n$ grows so in particular, Equation~\ref{eq:c2_c1} is satisfied for $\alpha = \alpha_s$ if $n$ is large enough. It now remains to show that such a fraction $\alpha$ of active nodes can be reached in logarithmic time. Usual gossip analysis takes advantage of the exponential growth of the informed nodes during early rounds for which no collision occur. We have to adapt the analysis to the fact that nodes may stop communicating with some probability and split the analysis into two phases. 

In the rest of the proof, we prove that a constant fraction of the nodes (independent of $n$) can be reached with a logarithmic number of rounds. We first analyze how long it takes to go from $\mathcal{O} (\log n)$ to $\mathcal{O}(n)$ active nodes and then from $1$ to $\mathcal{O} (\log n)$.
Equation~\ref{eq:f_alpha_alpha} along with Equation~\ref{eq:azuma_hoeff} with $\lambda = \frac{s}{4}$ give that as long as $A_{t_0} (1 + \frac{s}{4})^{t} \leq \alpha_s n$ then
\begin{align*}
p\left(A_{t + t_0 +1} \geq A_{t_0} (1 + \frac{s}{4})^{t+1} | A_t = A_{t_0} (1 + \frac{s}{4})^{t} \right) \geq 1 - e^{- \frac{\alpha n s^2}{128}}
\end{align*}
for any $t \geq t_0$ such that $A_{t_0}\left( 1 + \frac{c}{2}\right)^t \leq n \alpha_s$ . Therefore, if we do this for all $t \leq t_{\alpha_s} = \frac{\log(\alpha_s n)}{\log(1 + \frac{s}{4})}$ rounds (so for a logarithmic number of rounds) then $
p\left(A_{t_{\alpha_s} + t_0} \geq n \alpha | A_{t_0} \right) \geq \big(1 - e^{- \frac{A_{t_0} s^2}{128}} \big) ^ {t_{\alpha_s}}$ because in this case, $A_t \geq A_{t_0}$ for $t \geq t_0$. Therefore, if 
\begin{equation}
\label{eq:c2_c2}
A_{t_0} \geq - \frac{128}{s^2} \log \left(1 - \left(1 - \delta\right)^\frac{1}{t_{\alpha_s}}\right), \hbox{ then } p(A_{t_{\alpha_s} + t_0} \geq n \alpha_s | A_{t_0}) \geq 1 - \delta.
\end{equation}
Using the fact that $\left(1 - x \right)^{y} \leq e^{-xy} \leq 1 - xy + \frac{x^2 y^2}{2}$ along with the fact that $\delta < 1 \leq t_{\alpha_s}$ to simplify Equation~\ref{eq:c2_c2}, we show that if $A_{t_0}$ satisfies:
\begin{equation} \label{eq:condition_A_t_0}
A_{t_0} \geq \frac{128}{s^2} \log\left( \frac{2t_{\alpha_s}}{\delta}\right) ,
\end{equation}
then it also satisfies Equation~\ref{eq:c2_c2}.

It only remains to prove that such an $A_{t_0}$ can be reached with $t_0$ logarithmic in $n$.
For this, use again Azuma inequality but on $\kappa$ consecutive rounds this time. Therefore, Equation~\ref{eq:azuma_easy} becomes, assuming that at least $A_t$ messages are sent at each round:
\begin{equation}\label{eq:azuma_easy_consecutive}
    p\left(A_{t+\kappa} \leq A_t\left(1 + \frac{\kappa s}{2} - \lambda\right) | A_t\right) \leq e^{-\frac{\lambda^2}{8 \kappa}A_t}.
\end{equation}
We apply this inequality for $\kappa = 2C\log(n) / s$ and $\lambda = C \log(n) / 2$ (which is valid because at least $A_0 = 1$ node is active at each round), which yields:
\begin{equation}
    p\left(A_{\kappa} \leq 1 + \frac{C \log(n)}{2} | A_t\right) \leq e^{- \frac{s \log(n)}{64}C}.
\end{equation}
In particular, for a fixed values of $C$, $s$, and $\delta$, then $p\left(A_{\kappa} \geq \frac{C \log(n)}{2} | A_t\right) \leq 1 - \delta$ for $n$ large enough. Finally, $t_{\alpha_s}$ is logarithmic in $n$ so similarly, Equation~\eqref{eq:condition_A_t_0} is satisfied for $t_0 = \kappa$ if $n$ is large enough. 

We conclude the proof by noting that
\begin{align*}
p&\left( N_{0, t_0 + t_{\alpha_s} + C \alpha^{-1}\log n} \geq C n \log n \right)\\
& \geq p\left(A_{t_0} \geq \frac{128}{s^2}\log\left(\frac{2t_{\alpha_s}}{\delta}\right)\right)
p\left(A_{t_{\alpha_s} + t_0} \geq n \alpha_s | A_{t_0} \geq \frac{128}{s^2}\log\left(\frac{2t_{\alpha_s}}{\delta}\right) \right)\\
& \qquad \qquad \times p\left(N_{t_{\alpha_s} + t_0, t_{\alpha_s} + t_0 + C \alpha^{-1}\log n} \geq C n \log n | A_{t_0 + t_{\alpha_s}} \geq n \alpha_s \right)\\
&\geq \left(1 - \delta \right)^3 \geq 1 - 3\delta.
\end{align*}

Finally, we have that  $t_0 \leq 2C \log(n) / s$, $t_{\alpha_s} \leq \log(n) / s$ and $1 / \alpha_s \leq 3 / s$ so in the end, $t_0 + t_{\alpha_s} + C \alpha^{-1}\log n \leq 6 C \log(n) / s$. Without loss of generality, $\delta$ can also be replaced by $\delta / 3$.
\end{proof} 

\begin{remark}[Extension to the Asynchronous Version]
The first part of the proof directly extends to the asynchronous algorithm by simply considering slices of time during which a set of $\alpha n$ nodes send $\alpha n$ messages, which essentially means constant time. Then, we consider a logarithmic number of slices. The phase from $1$ to $\mathcal{O}(\log n)$ active nodes requires sending a logarithmic number of messages and can thus be done in logarithmic time. Finally, phase 2 (going from $\mathcal{O}(\log n)$ to $\mathcal{O}(n)$ active nodes) consists in evaluating a logarithmic number of rounds during which a logarithmic number of nodes are active. Again, the only important thing is the number of messages sent (and not which node sent them) so using constant time intervals ensures that enough messages are sent between each pseudo-rounds with high probability. 
To summarize, it is possible to prove a statement very similar to that of Theorem~\ref{thm:speed_s} in the asynchronous setting, where the notion of rounds is replaced by constant time intervals. We omit the exact details of this alternative formulation.
\end{remark}

\subsection{Maximum Likelihood Estimation}
\label{app:proof_mle}

\begin{proof}[Proof of Theorem~\ref{thm:mle}]
We recall that $S$ is the output sequence observed by curious nodes, so that $S_0$, is the first node that communicates with a curious node. The source is noted $I_0$, as it is the first node informed of the rumor. The set $P$ is such that $p(I_0 = i) = 0$ if $i \notin P$. Recall that $t_c$ is such that $S_{t_c} \in P$ and $S_t \notin P$ for $0 \leq t < t_c$. By a slight abuse of notation, note $A_t$ the set of active nodes at the time where $S_t$ is disclosed (time of $t$-th communication with a curious node), so $S_t \in A_t$ for all $t$.

We know that for all $i \in P$ then $p((S_t)_{t < t_c}| A_{t_c}, I_0=i) = p((S_t)_{t < t_c}| A_{t_c})$ since $S_t \notin P$ for $t < t_c$. Similarly, $p((S_t)_{t \geq t_c}| A_{t_c}, I_0=i, (S_t)_{t < t_c}) = p((S_t)_{t \geq t_c}| A_{t_c})$ since the output after some time only depends on the active nodes at that time. Therefore, $p(S | A_{t_c}, I_0=i) = p(S| A_{t_c})$ for all $i \in P$, which critically relies on the fact that $t_c$ is the time of first disclosure of a node in $P$ (the first inequality would not hold otherwise). We note $\left[n\right] = \{1, ..., n\}$. We then write for $i \in P$:
\begin{align*}
p(I_0= i | S) &= \sum_{A \subset \left[n\right] }p(A_{t_c} = A | S)p(I_0=i | A_{t_c}=A, S)\\
&= \sum_{A \subset \left[n\right]}p(A_{t_c} = A | S)p(I_0=i | A_{t_c} = A)\\
&= \sum_{A \subset \left[n\right]: S_{t_c} \in A}p(A_{t_c} = A | S) \frac{p(I_0=i)}{p(A_{t_c} = A)} p(A_{t_c} = A | I_0=i).
\end{align*}

Let $j \in P \cap A_{t_c}$. If $i \in A_{t_c}$ then $p(A_{t_c}=A | I_0=i) = p(A_{t_c} = A | I_0=j)$. Otherwise, let us denote $E_{ij}(A) = \cap_{k \in A \backslash \{j\}}\{\text{k active at time $t_c$}\}\cap_{k \notin A \cup \{i\}}\{\text{k inactive at time $t_c$}\}$. This event represents the realization of $A_{t_c}$ for all nodes different from $i$ and $j$. We then write:
\begin{align*}
p&(A_{t_c}=A | I_0 = i) = p(\cap_{k \in A}\{k \in A_{t_c}\}\cap_{k \notin A}\{k \notin A_{t_c}\} | I_0 = i)\\
&= p(E_{ij}(A) | I_0=i) p(j \in A_{t_c}, i \notin A_{t_c} | I_0=i, E_{ij}(A)) \\
&= p(E_{ij}(A) | I_0=j) p(j \in A_{t_c}, i \notin A_{t_c} | I_0=i, E_{ij}(A)) \\
&\leq p(E_{ij}(A) | I_0=j) p(j \in A_{t_c}, i \notin A_{t_c} | I_0=j, E_{ij}(A)))\\
&= p(A_{t_c}=A | I_0 = j).
\end{align*}

The inequality comes from the fact that it is more likely that $j$ is active and $i$ is inactive if $j$ is the source than if $i$ is (i.e., if it is already the case at the beginning). This means that $p(A_{t_c} = A |I_0=i) \leq p(A_{t_c}=A | I_0=j)$ for all $i \in \left[n\right]$ and $j \in A_{t_c}$. Since the summation is over all $A$ such that $S_0 \in A$ (by definition of $S_{t_c}$ and $A_{t_c}$), and $p(A_{t_c} = A |I_0=i) \leq p(A_{t_c}=A | I_0=S_{t_c})$, we have for all considered $A$: 
\begin{align*}
p(I_0 = i | S) &=\sum_{A \subset \left[n\right]: S_{t_c} \in A}p(A_{t_c} = A | S)\frac{p(I_0=i)}{p(A_{t_c} = A)} p(A_{t_c} = A | I_0=i)\\
&\leq \sum_{A \subset \left[n\right]: S_{t_c} \in A}p(A_{t_c} = A | S)\frac{p(I_0=S_{t_c})}{p(A_{t_c} = A)} p(A_{t_c} = A | I_0=S_{t_c})\\
&= p(I_0 = S_{t_c} | S).
\end{align*}

This means that $S_{t_c}$ is more likely to be the source than any other suspected node when the adversary observes output $S$. Note that this requires uniform prior over nodes that can be suspected since we used the fact that $p(I_0 = i) = p(I_0 = S_{t_c})$ for all $i \in P$. For $i \notin P$, $p(I_0 = i | S) = 0 \leq p(I_0 = S_{t_c} | S)$.
\end{proof}

\section{Challenges of Private Gossip for General Graphs}
\label{app:general-graphs}

A natural extension of this work is to consider general graphs. We discuss in this section several aspects related to the natural privacy of gossip protocols in arbitrary graphs. In particular, we highlight the fact that problem-specific modeling choices are needed to go beyond the complete graph, and that even defining a notion of privacy that is suitable for all graphs is very challenging.  

\subsection{Average-Case versus Worst-Case Privacy}
Unlike the case of complete graphs, the location of curious nodes critically impacts the privacy guarantees in arbitrary graphs.
A naive way to deal with this issue is to randomize the location of curious nodes \emph{a posteriori}. Let us denote by $\mathcal{L}_{i,j}^f$ the set containing all subsets of nodes of size $f$ of the graph that do not contain $i$ and $j$. For fixed nodes $i$ and $j$, the set of curious nodes $\mathcal{C}$ is drawn from $U(\mathcal{L}_{i,j}^f)$, the uniform distribution over $\mathcal{L}_{i,j}^f$. For some parameters $\epsilon,\delta\geq 0$, privacy can be defined as follows: $\forall i,j \in \{0, ..., n-1\}, \  \forall S \in \mathcal{S}$ $$\mathbb{E}_{\mathcal{C} \sim U(\mathcal{L}_{i,j}^f)}[p_i(S, \mathcal{C}) - e^\epsilon p_j(S, \mathcal{C})] \leq \delta.$$
Note that $p_i(S, \mathcal{C}) = 0$ if the output sequence $S$ is not compatible with the set of curious nodes $\mathcal{C}$, i.e. if $(k,l) \in S$ and $k,l \notin \mathcal{C}$. To pick the curious nodes, it is possible to either pick a set of $f$ curious nodes at once or to pick each node (except for $i$ and $j$) with probability $f / n$. This randomized definition allows to prove a bound similar to that of Theorem~\ref{thm:push_lb} for arbitrary graphs. Indeed, the first node that receives the rumor has probability $\frac{f}{n}$ of being a curious node. However, such average-case notions of privacy are highly undesirable: in this case, no protection is provided against a (much more realistic) adversary that controls a fraction of nodes fixed in advance.

The worst-case approach consists in bounding the maximum difference instead of the expectation. This is the approach taken in our work for the complete graph (the max operator is implicit because the location of curious nodes does not matter in a complete graph). In the case of general graphs, the corresponding privacy definition is given by:
$\forall i,j \in \{0, ..., n-1\}, \  \forall S \in \mathcal{S}$,  $$\max_{\mathcal{C} \in \mathcal{L}_{i,j}^f}[p_i(S, \mathcal{C}) - e^\epsilon p_j(S, \mathcal{C})] \leq \delta.$$
We immediately observe that with this definition, it is impossible to have $\delta < 1$ as soon as there is a node in the graph with less than $f$ neighbors. This modeling choice is quite unrealistic as well because having a node surrounded by curious nodes means that the adversary actually believes this specific node has a strong probability of being the source and therefore put more sensors around it. A possible alternative would be to place curious nodes so as to bound the maximum privacy for any pair of nodes, and then evaluate the minimum privacy in this setting. This definition would mean that the adversary wants to be able to distinguish any pair of nodes as best as possible.

We see that choosing the locations of the curious nodes in an arbitrary graph is a complex problem that is heavily dependent on the topology of the graph and on the prior of the adversary on the locations of the curious nodes. Indeed, the adversary may simply want to isolate a sufficiently small group of nodes that have a high probability of being the source. 

\subsection{Relaxing the Differential Privacy Definition}
Differential privacy is a very strong notion that enforces indistinguishability between all pairs of nodes, in order to be robust to any prior information about who might be the source. In particular, an adversary should not be able to precisely identify the source even if it knows that only two nodes in the graph can be the source. Although it was possible to obtain meaningful privacy guarantees of this kind for the complete graph, this appears to be too strong of a requirement for some graph topology and location of curious nodes.
Consider for instance the extreme case of a line graph. It is clear that any non-trivial adversary can always distinguish between two segments of the line. This intuition directly extends to any graph which admits a cut with only curious nodes in it. 

A natural idea is to restrict the pairs of nodes that are required to be indistinguishable. Several ways of doing this may be considered. For instance, one could require that each node is indistinguishable from $k$ other nodes in the graph. Such relaxed definition could be obtained using the Pufferfish framework~\cite{kifer2014pufferfish}, which explicitly provides a notion of secret to protect. But how to choose such $k$ nodes based on the topology and how to characterize the optimal locations of curious nodes is very challenging.
Another direction could be to adapt the notions of metric-based differential privacy \cite{geoprivacy,metricprivacy} to design a notion of privacy where the required indistinguishability for a given node is a function of its distance to curious nodes in the graph, or to require that pairs of nodes become less indistinguishable with distance in the graph. Yet, it is not clear how to characterize the influence of the graph topology.

\subsection{Optimality of Algorithm~\ref{algo:gen_sync_push} with \hmath $s=0$}
We have seen in this section that the privacy guarantees for arbitrary graphs heavily rely on the particular privacy notion and that some recent privacy frameworks may provide tools to relax the classic differential privacy definition which is generally too strong for arbitrary graphs. We conjecture that for some of these relaxed definitions, the optimal algorithm for general graphs will be the same as in our case of the complete graph. Indeed, the strength of Algorithm~\ref{algo:gen_sync_push} with $s=0$ is to forget initial conditions quickly. In the complete graph, it does so in one step. In an arbitrary graph, the information about the part of the graph the source belongs to is still present after some steps, but the source should quickly be completely indistinguishable from its direct neighbors. In particular, attacks based on centrality \cite{shah2011rumors} are rather meaningless against this algorithm because spreading only occurs along a random walk in the graph. As in the case of the complete graph, Algorithm~\ref{algo:gen_sync_push} with $s>0$ is then likely to enjoy near-optimal privacy guarantees.   

\end{document}